\theoremstyle{plain}% default
\newtheorem{theorem}{Theorem}%[section]
\newtheorem*{theorem*}{Theorem}%[section]
\newtheorem{proposition}{Proposition}%[section]
\newtheorem{corollary}{Corollary}%[section]
\theoremstyle{definition}
\newtheorem{definition}{Definition}%[section]
\theoremstyle{remark}
\newcommand{\beq}{\begin{eqnarray}}
\newcommand{\eeq}{\end{eqnarray}}
\newcommand\nc\newcommand
\nc\bfa{{\boldsymbol a}}\nc\bfA{{\boldsymbol A}}\nc\cA{{\mathcal A}}
\nc\bfb{{\boldsymbol b}}\nc\bfB{{\boldsymbol B}}\nc\cB{{\mathcal B}}
\nc\bfc{{\boldsymbol c}}\nc\bfC{{\boldsymbol C}}\nc\cC{{\mathcal C}}
\nc\sC{{\mathscr C}}
\nc\bfd{{\boldsymbol d}}\nc\bfD{{\boldsymbol D}}\nc\cD{{\mathcal D}}
\nc\bfe{{\boldsymbol e}}\nc\bfE{{\boldsymbol E}}\nc\cE{{\mathcal E}}
\nc\bff{{\boldsymbol f}}\nc\bfF{{\boldsymbol F}}\nc\cF{{\mathcal F}}
\nc\bfg{{\boldsymbol g}}\nc\bfG{{\boldsymbol G}}\nc\cG{{\mathcal G}}
\nc\bfh{{\boldsymbol h}}\nc\bfH{{\boldsymbol H}}\nc\cH{{\mathcal H}}
\nc\bfi{{\boldsymbol i}}\nc\bfI{{\boldsymbol I}}\nc\cI{{\mathcal I}}
\nc\bfj{{\boldsymbol j}}\nc\bfJ{{\boldsymbol J}}\nc\cJ{{\mathcal J}}
\nc\bfk{{\boldsymbol k}}\nc\bfK{{\boldsymbol K}}\nc\cK{{\mathcal K}}
\nc\bfl{{\boldsymbol l}}\nc\bfL{{\boldsymbol L}}\nc\cL{{\mathcal L}}
\nc\bfm{{\boldsymbol m}}\nc\bfM{{\boldsymbol M}}\nc\cM{{\mathcal M}}
\nc\bfn{{\boldsymbol n}}\nc\bfN{{\boldsymbol N}}\nc\cN{{\mathcal N}}
\nc\bfo{{\boldsymbol o}}\nc\bfO{{\boldsymbol O}}\nc\cO{{\mathcal O}}
\nc\bfp{{\boldsymbol p}}\nc\bfP{{\boldsymbol P}}\nc\cP{{\mathcal P}}
\nc\bfq{{\boldsymbol q}}\nc\bfQ{{\boldsymbol Q}}\nc\cQ{{\mathcal Q}}
\nc\bfr{{\boldsymbol r}}\nc\bfR{{\boldsymbol R}}\nc\cR{{\mathcal R}}
\nc\bfs{{\boldsymbol s}}\nc\bfS{{\boldsymbol S}}\nc\cS{{\mathcal S}}
\nc\bft{{\boldsymbol t}}\nc\bfT{{\boldsymbol T}}\nc\cT{{\mathcal T}}
\nc\bfu{{\boldsymbol u}}\nc\bfU{{\boldsymbol U}}\nc\cU{{\mathcal U}}
\nc\bfv{{\boldsymbol v}}\nc\bfV{{\boldsymbol V}}\nc\cV{{\mathcal V}}
\nc\bfw{{\boldsymbol w}}\nc\bfW{{\boldsymbol W}}\nc\cW{{\mathcal W}}
\nc\bfx{{\boldsymbol x}}\nc\bfX{{\boldsymbol X}}\nc\cX{{\mathcal X}}
\nc\bfy{{\boldsymbol y}}\nc\bfY{{\boldsymbol Y}}\nc\cY{{\mathcal Y}}
\nc\bfz{{\boldsymbol z}}\nc\bfZ{{\boldsymbol Z}}\nc\cZ{{\mathcal Z}}
\nc{\remove}[1]{}
\nc\diff{{\mathrm d}}
\nc\e{{\mathrm e}}
\nc\calC{{\mathcal C}}
\newcommand{\h}{h_\mathrm{B}}
\newcommand{\expect}{{\mathbb E}}
\def\h_q{\qopname\relax{no}{h_q}}
\newcounter{ALC@tempcntr}% Temporary counter for storage
\newcommand{\cc}{{\sf Crowd-ER}}
\DeclareMathAlphabet{\mathpzc}{OT1}{pzc}{m}{it}
\begin{document}
\title{A Theoretical Analysis of First Heuristics of Crowdsourced Entity Resolution}
%\author{
%  Arya Mazumdar \\
%  College of Information and Computer Science\\
%  University of Massachusetts at Amherst\\
%  Amherst, MA 01003 \\
%  \texttt{arya@cs.umass.edu} \\
% \And
% Barna Saha \\
%  College of Information and Computer Science\\
%  University of Massachusetts at Amherst\\
%  Amherst, MA 01003 \\
%  \texttt{barna@cs.umass.edu} 
% }
\author{
Arya Mazumdar \and Barna Saha \\
College of Information \& Computer Sciences\\
University of Massachusetts Amherst \\ 
\texttt{\{arya,barna\}@cs.umass.edu}
}
\maketitle

\begin{abstract}
Entity resolution (ER) is the task of identifying all records in a database that refer to the same underlying entity, and are therefore duplicates of each other. Due to inherent ambiguity of data representation and poor data quality, ER is a challenging task for any automated process. As a remedy, human-powered ER via crowdsourcing has become popular in recent years. Using crowd to answer queries is costly and time consuming. Furthermore, crowd-answers can often be faulty. Therefore, crowd-based ER methods aim to minimize human participation without sacrificing the quality and use a computer generated similarity matrix actively. While, some of these methods perform well in practice, no theoretical analysis exists for them, and further their worst case performances do not reflect the experimental findings. This creates a disparity in the understanding of the popular heuristics for this problem. In this paper, we make the first attempt to close this gap. We provide a thorough analysis of the prominent heuristic algorithms for crowd-based ER. We justify experimental observations with our analysis and information theoretic lower bounds.\end{abstract}

\section{Introduction} %\textcolor{red}{Can we cite Larsen and Rubin somehow?}
Entity resolution (ER, record linkage, deduplication, etc.) seeks
to identify which records in a data set refer to the same underlying
real-world entity \cite{fellegi1969theory,elmagarmid2007duplicate,getoor2012entity,larsen2001iterative,christen2012data}. Our ability to represent %and misrepresent
information about real-world entities in very diverse ways makes this a complicated problem.
For example, collecting profiles of people and businesses, or specifications
of products and services from websites and social media
sites can result in billions of records that need to be resolved. These
entities are identified in a wide variety of ways, complicated further by language ambiguity, poor data entry, missing values, changing attributes and formatting issues.  ER is a fundamental task in data processing with wide-array of applications. There is a huge literature on ER techniques; many include machine learning algorithms, such as decision trees, SVMs, ensembles of classifiers, conditional random fields, unsupervised learning etc. (see \cite{getoor2012entity} for a recent survey). Yet, ER remains a demanding task for any automated strategy yielding low accuracy.

ER can be cast as a  clustering problem. Consider a set of $n$ elements $V$ that must be clustered into  $k$ disjoint parts $V_i, i=1,2,\ldots,k$.  The  true underlying clusters  $V_i \in [n]$, $i \in [1,k]$ are unknown to us, and so is $k$.  Each of these $V_i$s represents an entity. % that we want to discover.  
 Each element $v \in V$ has a set of attributes. A similarity function is used to estimate the similarity of the attribute sets of two nodes $u$ and $v$. If $u$ and $v$ represent the same entity, then an ideal similarity function will return $1$, and if they are different, then it will return $0$. 
  However, in practice, it is impossible to find an ideal similarity function, or even a function close to it. Often, some attribute values may be missing or incorrect, and that leads to similarity values that are noisy representation of the ideal similarity function. 
% Any automated process that uses $W$ to identify the latent clusters $V_1,V_2,\dots,V_k$ may make mistakes if $f_r$ and $f_g$ are close (in terms of some metric). 
Any automated process that uses such similarity function is thus prone to make errors.
To overcome this difficulty, a relatively recent line of works  propose to use human knowledge via crowdsourcing to boost accuracy of ER \cite{dkmr:14,fss:16,DBLP:conf/icde/VerroiosG15,DBLP:journals/corr/GruenheidNKGK15,wang2012crowder,wang2013leveraging,vesdapunt2014crowdsourcing,DBLP:conf/nips/YiJJJY12,whang2013question}. Human based on domain knowledge can match and distinguish entities with complex representations, where automated strategies fail.
 
 \begin{figure*}[ht]
\centering
\includegraphics[width=0.7\textwidth]{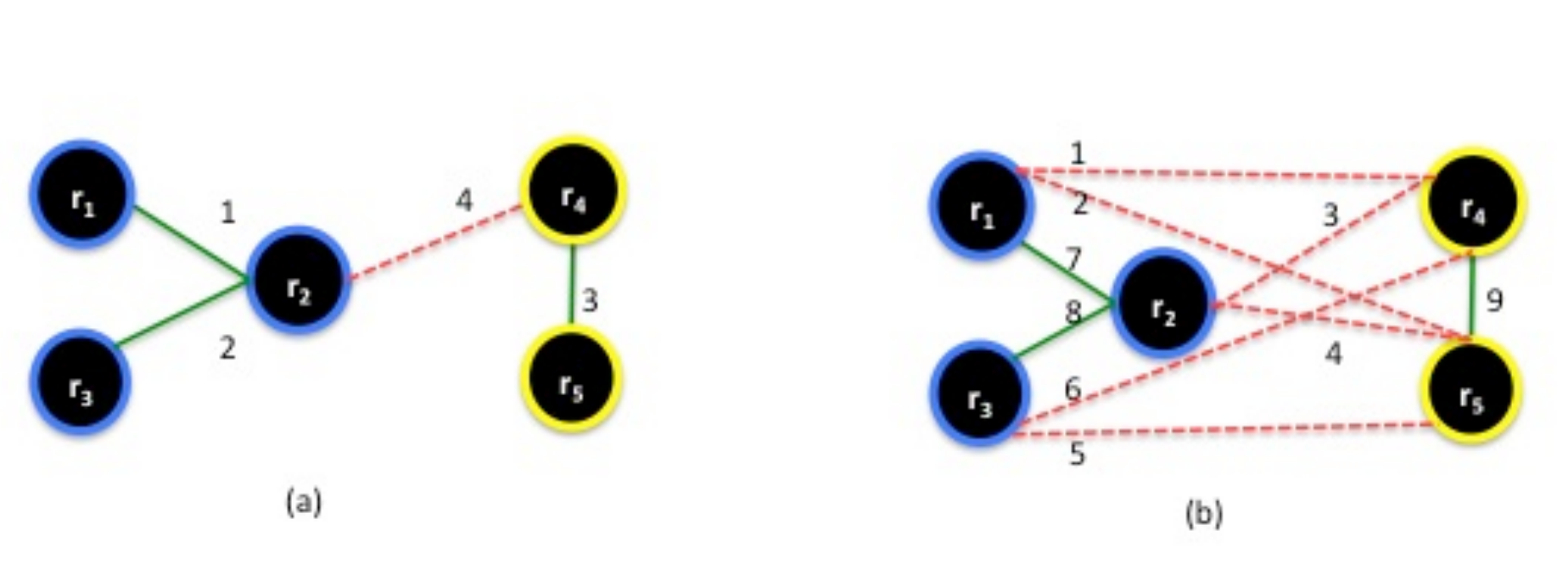}
\vspace{-0.2in}
\caption{Record pairs connected by green (resp.\ red) edges are ``matching'' (resp.\ ``non-matching'') in the real world. The numbers on the edges indicate the ordering of queries. While querying strategy of (a) only results in $4$ queries, the querying strategy of (b) results in $9$ queries. }
\label{fig:example}
\end{figure*}

\noindent{\bf Motivating example.} Consider the following illustrative example shown in 
Figure~\ref{fig:example}.
The Walt Disney, commonly known as Disney, is an American multinational media and entertainment company that owns and licenses $14$ theme parks around the world.
\footnote{
\url{https://en.wikipedia.org/wiki/The_Walt_Disney_Company}}
Given the six places 
($r_1$)~{\sf Disney World},
($r_2$)~{\sf Walt Disney World Resort}, 
($r_3$)~{\sf Walt Disney Theme Park, Orlando},
($r_4$)~{\sf Disneyland},
($r_5$)~{\sf Disneyland Park},
humans can determine using domain knowledge that these correspond to 
two entities: $r_1, r_2, r_3$ refer to one entity, and $r_4, r_5$ refer to a second entity.

Answering queries by crowd could be time-consuming and costly. Therefore, a crowd based ER strategy must attempt to minimize the number of queries to the oracle while  resolving the clusters exactly. Having access to ideal crowd answers, a good ordering of comparing record pairs is $(r_1,r_2)$, $(r_2,r_3)$, $(r_4,r_5)$, $(r_1,r_5)$. After the first three pairs have been compared, we can safely infer as ``matching'' the remaining pair $(r_1,r_3)$ leveraging transitive relations. After the last pair in the ordering has been compared, we can safely infer as ``non-matching'' all the remaining pairs $(r_1,r_4)$, $(r_2,r_4)$, $(r_2,r_5)$, $(r_3,r_4)$, $(r_3,r_5)$ in the database.

 The work by Wang et al. \cite{wang2013leveraging} was among the first few \cite{wang2012crowder,demartini2012zencrowd,whang2013question} to propose the notion of hybrid human-machine approach for entity resolution. Moreover, it is the first paper to leverage the transitive relationship among the entities to minimize the number of queries which has since become a staple in every follow-up work on this topic \cite{fss:16,DBLP:conf/icde/VerroiosG15,DBLP:journals/corr/GruenheidNKGK15,vesdapunt2014crowdsourcing}. Assuming there is an oracle, an abstraction of a crowd-sourcing platform that can correctly answer questions of the form {\it ``Do records $u$ and $v$ refer to the same entity?''}, they presented a new algorithm for crowd-sourced ER.  To minimize the number of queries to the crowd oracle, Wang et al. utilizes the transitive relation in
which known match and non-match labels on some record pairs can
be used to automatically infer match or non-match labels on other
record pairs. 
In short, the heuristic algorithm by Wang et al. does the following: it orders the tuples (record pairs/edges)  in nonincreasing order of similarity, and query any edge according to that order whenever the right value of that edge cannot be transitively deduced from the already queried/inferred edges so far.

While the crowd-sourcing algorithm of Wang et al. works reasonably well on real datasets,  theoretical guarantees for it was not provided. However, in \cite{vesdapunt2014crowdsourcing}, Vesdapunt et al. showed that in some instances this algorithm can only give an $\Theta(n)$ approximation, that is when an optimum algorithm may require $c$ queries,  Wang et al.'s algorithm can require $\Theta(cn)$ queries. 

Vesdapunt et al. 
proposed an algorithm  that proceeds in the following iterative manner. In each round, an element to be clustered is compared with
one representative of all the existing clusters. The order of these comparisons is defined by a descending order of the similarity measures. As soon as a positive query result is found the element is assigned to the corresponding cluster and the algorithm moves to the next round with a new element. It is  easy to see that in the worst case the number of queries made by the algorithm is $nk$, where $n$ is the number of 
elements and $k$ is the number of clusters. It also follows that this is at least an $O(k)$ approximation. 

Note that \cite{wang2013leveraging,vesdapunt2014crowdsourcing} consider the answers of queries are correct as an ideal crowd abstraction - this can often be guaranteed via majority voting. But it is unclear that how the quality of the similarity measurements affects the total number of queries. Indeed, in typical datasets, the performances of the algorithms of 
Wang et al. and Vesdapunt et al. are quite similar, and they are much better than their worst case guarantees that do not take into account the existence of any meaningful similarity measures. This means the presence of
the similarity measures  helps reduce the query complexity significantly. Is there a way to theoretically establish that and come up with guarantees that match the experimental observations?

It is of paramount interest to characterize the query complexity (number of questions asked to the crowd) of these popular heuristics
and  come up with  algorithms that minimize such complexity. The query complexity is directly proportional to 
the overall cost of a crowd-based algorithm, due to the fact that crowd questions are time-consuming and in many times involve  compensations. Designing a strategy that would minimize the query complexity can directly be seen as alternatives to {\em active learning} problem with minimum labeling requirements \cite{Sarawagi:2002,Bellare:2012}. From the perspective of lower bounding the query complexity, ER  can be seen as a  {\em reinforcement learning} problem. Indeed, in each step of assigning a record to one of the underlying entities, a query must be made wisely so that under any adversarial configurations, the total number of queries remain small. %This is similar philosophically to
% {\em  regret minimization} in multi-arm bandit problems \cite{auer2002nonstochastic,cesa2006prediction} as reflected in our lower bounding technique. 
%\vspace{0.1in}
 
\noindent{\bf Contributions.}
In this paper we assume the following model for the similarity measurements.  Let $W = \{w_{u,v}\}_{(u,v) \in V \times V}$ denote the matrix obtained by pair-wise similarity computation,
   where $w_{u,v}$ is a random variable drawn from a probability distribution $f_g$ if $u$ and $v$ belong to the same cluster and drawn from a probability distribution $f_r$ otherwise. The subscripts of $f_r$ and $f_g$ are chosen to respectively signify a ``red edge'' (or absence of a link) and a ``green edge'' (or presence of a link). 
 Note that, this model of similarity matrix is by no means the only possible; however it captures the essential flavor of the problem.

Our main contribution in this paper is to provide a theoretical analysis of query complexities of the two aforementioned 
heuristics from  \cite{wang2013leveraging,vesdapunt2014crowdsourcing}. Our analysis quantifies the effect of the presence of similarity measures 
in these algorithms, establishes the superiority between these algorithms under different criteria, and derives the exact expression of query complexity 
under some fundamental probability models. 

%When comparing to the recently proposed lower bounds to establish the near-optimality or sub-optimality of the above heuristics \cite{mazumdar2016clustering}, interestingly Hellinger divergence between $f_g$ and $f_r$ appears as the right measure of quality of the similarity matrix.

Next, to establish the near-optimality or sub-optimality of the above heuristics, we compare our 
results with an information theoretic lower bound recently proposed by us \cite{mazumdar2016clustering}. As a corollary to the results of \cite{mazumdar2016clustering}, it can be seen that the information theoretic lower bound depends on the Hellinger divergence between $f_g$ and $f_r$. More interestingly, the quality of the similarity matrix can be characterized by the Hellinger divergence between $f_g$ and $f_r$ as well.

% we provide a lower bound on query complexity for the crowdsourced entity resolution problem in terms of the Hellinger divergence between
%the probability measures $f_g$ and $f_r$. We compute the Hellinger distance between some probability models useful in this context and via that establish t.

Finally, we show that the experimental observations of \cite{wang2013leveraging,vesdapunt2014crowdsourcing} agree with our theoretical analysis of their algorithms. Moreover, we conduct a thorough experiment on the bibliographical \texttt{cora}~\cite{cora2004} dataset for ER and several synthetic datasets to validate the theoretical findings further.

\section{System model and techniques}

\subsection{Crowdsourced Entity Resolution \cc}

Consider a set of elements $V\equiv [n]$ which is a disjoint union of $k$ clusters $V_i$, $i =1, \dots, k$, where $k$ and the subsets $V_i \subseteq [n]$  are unknown. 
The crowd (or the oracle) is presented with an element-pair $(u,v) \in V \times V$ for a query, that results in a binary answer denoting the event $u,v$ belonging to the same cluster. 
Note that, this perfect oracle model had been used in the prominent previous works by Wang et al. and Vesdapunt et al.\shortcite{wang2013leveraging,vesdapunt2014crowdsourcing}. 

%While we might model an imperfect oracle as one that gives a wrong answer with some probability independently, such 
%models do not add much to the query complexity of the perfect oracle model. 
We can assume that with probability $0 <p_i <1$, the crowd gives a wrong answer to the $i$th
 query. 
However, with resampling  the $i$th query $\Omega(\log n)$ times, that is by asking the same $i$th query to $\Omega(\log n)$ different users and by taking the majority vote, we can drive the probability $p_i$ to nearly $0$ and return to the model of perfect oracle.
%Indeed, if a crowd answer is faulty with probability $p_{i,j}$, where $i$ is the index of the query and $j$ is index of the number of times the same query being asked, we can take a majority vote among the answers of the crowd and use
 %Chernoff bound to show that we are approaching the perfect oracle model with high probability \cite[Ch.~4]{motwani2010randomized}. This only adds a factor of $\log n$ to the query
%complexity of the 
 %perfect oracle model. 
  Note that we have assumed independence among the resampled queries over the index $j$, which can be justified 
 since we are sampling a growing ($\Omega(\log n)$) number of samples. 
 Furthermore, repetition of the same query to the crowd may not not lead to reduction in the error probability, i.e., a 
 persistent error. Even in this scenario an element can be queried with multiple elements from a same cluster to infer with certainty whether the element belong to the cluster or not.   
 These situations have been covered in detail in our recent work \cite{mazumdar2016clustering}. 
 Henceforth, in this paper, we only consider the perfect oracle model. All our results hold for the faulty oracle model described above with only an $O(\log{n})$ blow-up in the query complexity.

%To make amends for the
%fact that
%iid fault model may not perfectly capture the  correlation among crowd answer, we do not allow resampling the same question. 

 Consider $W$,  an $n \times n$ similarity matrix, with  the $(u,v)$th entry $w_{u,v}$ a nonnegative random variable in $[0,1]$ drawn from a probability density or mass function $f_g$ when ${u,v}$ belong to the same cluster, and  drawn from a probability density or mass function $f_r$ otherwise. $f_g$ and $f_r$ are unknown.

%There is an oracle $\mathcal{O}: V\times V \to \{\pm1\},$ which takes as input a pair of vertices $u,v \in V \times V$, and returns either $+1$ or $-1$. Let $\mathcal{O}(Q)$, $Q \subseteq V \times V$ correspond to oracle answers to all pairwise queries in $Q$. The queries in $Q$ can be done adaptively.

The problem of \cc~is to design a set of queries in $V \times V$, given $V$ and $W$, such that from the answers to the queries, it is possible to 
 recover $V_i$, $i=1,2,...,k$.
  
%\begin{itemize}
%\item {\bf \cc} Here $\mathcal{O}(u,v)=+1$ iff $u$ and $v$ belong to the same cluster and $\mathcal{O}(u,v)=-1$ iff $u$ and $v$ belong to different clusters.
%\begin{enumerate}
%\item {\bf Without Side Information.} Given $V$, find $Q \subseteq V \times V$ such that $|Q|$ is minimum, and from  $\mathcal{O}(Q)$ it is possible to recover $V_i$, $i=1,2,...,k$.
%\item {\bf With Side Information.} Given $V$ and $W$, find $Q \subseteq V \times V$ such that $|Q|$ is minimum, and from  $\mathcal{O}(Q)$ it is possible to recover $V_i$, $i=1,2,...,k$.
%\end{enumerate}
%\end{itemize}
\subsection{The two heuristic algorithms}
{\bf The Edge ordering algorithm \cite{wang2013leveraging}.}
In this algorithm, we arrange the set $V\times V$ in non-increasing order of similarity values $w_{i,j}$s. We then query sequentially according to this order. Whenever possible we apply transitive relation to infer edges. For example, if the queries $(i,j)$ and $(j,l)$  both get positive answers then there must be an edge $(i,l)$, and we do not have to make the query $(i,l)$. We stop when all the edges are either queried, or inferred. %We refer this as {\em edge-ordering} strategy.

\vspace{0.1in}
\noindent{\bf The Node ordering algorithm \cite{vesdapunt2014crowdsourcing}.}
In this algorithm,   the empirical expected size of the cluster containing element $i$, $1 \le i \le n$, is first computed as $\sum_{j} w_{i,j}$.  %first the expected sizes of the clusters are computed assuming the similarity matrix $W$ to give probabilities (for details, see  \cite{vesdapunt2014crowdsourcing}).
Then
 all the elements are ordered non-increasingly according to the empirical expected sizes of the clusters containing them. At any point in the execution, the algorithm maintains at most $k$ clusters. The algorithm selects the next element and issues queries involving that element and elements which are already clustered in non-increasing order of their similarity, and apply transitivity for inference. Therefore, the algorithm issues at most one query involving the current node and an existing cluster. Trivially, this gives an $O(k)$-approximation. 

\remove{
\subsection{Information theoretic notions}
The following information theoretic notions will be useful in our lower bound. The definitions are quite standard and can be found in
many texts such as \cite{gibbs2002choosing}.
\begin{definition}[Total Variation Distance]
For two probability distributions $P$ and $Q$ defined on a sample space $\cX$ and same sigma-algebra $\cF$,
their total variation distance is,
$$
\|P -Q\|_{TV} = \sup \{P(A) -Q(A): A \in \cF\}.
$$
In words, the distance between two distributions is 
their largest difference over any measurable set. For finite $\cX$ total variation distance is half of the $\ell_1$ distance
between pmfs.  
\end{definition}

\begin{definition}[Hellinger Divergence]
The Hellinger divergence between two  distributions $P$ and $Q$ that are absolutely continuous with respect to any other measure
is given by,
$$
\cH(P,Q) = \sqrt{\frac12\int(\sqrt{dP}-\sqrt{dQ})^2}.
$$
If $P$ and $Q$ are continuous distributions with pdfs $f(x)$ and $g(x)$ respectively, then
$$
\cH(f,g) = \sqrt{\frac12\int_{-\infty}^{\infty}(\sqrt{f(x)}-\sqrt{g(x)})^2 dx}.
$$
\end{definition}

Since total variation distance is half of the $\ell_1$-distance, using the inequalities between norms, it can be seen that,
\begin{align}
\label{eq:HTV}
\|P -Q\|_{TV} \le \sqrt{2} \cH(P,Q).
\end{align}

The following property of the Hellinger divergence is going to be useful to us.
Consider a set of  random variables $X_1, \dots, X_m$, and consider the two joint distribution of the 
random variables, $P^m$ and $Q^m$. When the random variables are independent, let $P_i$ and $Q_i$ be the corresponding marginal distribution of the 
random variable $X_i, i =1,\dots, m.$  That is, we have, 
$P^m(x_1, x_2, \dots, x_m ) = \prod_{i =1}^m P_i(x_i)$ and
$Q^m(x_1, x_2, \dots, x_m ) = \prod_{i =1}^m Q_i(x_i).$
Then we must have,
\begin{equation}\label{eq:HJ}
\cH(P^m , Q^m)^2 \le \sum_{i=1}^m \cH(P_i, Q_i)^2.
\end{equation}
}

\section{Information theoretic lower bound}
%Consider the following claim.
%\begin{proposition}
%If there are $k$ clusters in the original dataset, then the  query complexity of  any optimal algorithm for \cc~ is given by
%$(n-k)+{k \choose{2}}.$
%\end{proposition}
%To prove this statement, notice that we need to query at least $(n-k)$  pairs of vertices  to form a spanning tree in each cluster, and ${k \choose{2}}$ pairs to differentiate between every pair of clusters.
Note that, in the absence of similarity matrix $W$, any optimal (possibly randomized) algorithm must make $\Omega(nk)$ queries to 
solve \cc. This is true because an input can always be generated that  makes $\Omega(n)$ vertices to be involved in $\Omega(k)$ queries before
they can be correctly assigned. However, when we are allowed to use the similarity matrix, this  bound can be significantly reduced. 
Indeed, the following lower bound follows as a corollary of the results of
our previous work  \cite{mazumdar2016clustering}. % for a detailed description.

%Our main lower bound result is  summarized below.
\begin{theorem}\label{thm:lb-main}%\label{thm:side}
Given the number of clusters $k$ and $f_g, f_r,$ any randomized algorithm    that does not perform at least $\Omega\Big(\min\Big\{\frac{k^2}{\cH^2(f_g,f_r)}, nk\Big\}\Big)$ queries,  will be unable to return the correct clustering  with high probability,
where $\cH^2(f_g,f_r)\equiv \frac12\int_{-\infty}^{\infty}(\sqrt{f_g(x)}-\sqrt{f_r(x)})^2 dx$ is the squared Hellinger divergence between the probability measures $f_g$ and $f_r$.
  % at least $\frac1{10}$. 
\end{theorem}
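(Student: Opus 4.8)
The plan is to prove the bound by an information-theoretic change-of-measure argument that casts exact recovery as a family of hypothesis-testing problems; this is the route of the general lower bound in \cite{mazumdar2016clustering}, and the stated theorem is obtained by specializing its divergence term to the squared Hellinger divergence $\cH^2(f_g,f_r)$. The key structural fact to exploit is that the algorithm draws information from two sources, namely the free similarity matrix $W$ and the answers to its $q$ queries, and the goal is to show that unless $q$ is large, neither source separates a well-chosen family of clusterings.

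For the $k^2/\cH^2$ term I would build a hard instance with $\Theta(k)$ clusters each of size $\Theta(1/\cH^2(f_g,f_r))$, so that $\Theta(k/\cH^2(f_g,f_r))$ vertices sit in these clusters and form a set of ``confusable'' vertices; for each such vertex $v$, nature picks its cluster uniformly among the candidates, and correct recovery must reproduce this choice. Fixing $v$ and two candidate clusters $V_a,V_b$, the two induced laws of $W$ differ only on the independent entries $\{w_{v,u}:u\in V_a\cup V_b\}$, so by subadditivity of squared Hellinger divergence $\cH^2(\text{law}_a,\text{law}_b)\le(|V_a|+|V_b|)\,\cH^2(f_g,f_r)=\Theta(1)$; then $\|P-Q\|_{TV}\le\sqrt{2}\,\cH(P,Q)$ shows that $W$ cannot reliably separate the two. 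The calibration of the cluster size to $\Theta(1/\cH^2)$ is exactly what drives this per-vertex Hellinger budget down to a constant.

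Next I would account for the queries. Under the perfect oracle a query returns a deterministic bit, but it reveals $v$'s membership only when it touches $v$'s true candidate cluster; since that cluster is uniform over $\Theta(k)$ options and $W$ confers only an $O(1)$ advantage, locating it costs $\Omega(k)$ queries in expectation for each confusable vertex (Le Cam's two-point bound per vertex, or Fano across the $\Theta(k)$ candidates). Summing this budget over the $\Theta(k/\cH^2(f_g,f_r))$ confusable vertices yields $\Omega(k^2/\cH^2(f_g,f_r))$ queries. The competing quantity is the trivial $\Omega(nk)$ bound that holds even without $W$, as already noted just before the theorem; it dominates precisely when $\cH^2(f_g,f_r)\le k/n$, which is also the regime in which clusters of size $\Theta(1/\cH^2)$ no longer fit inside $n$ vertices, so the small-cluster construction degenerates and only $\Omega(nk)$ survives. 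Taking the smaller bound across the two regimes gives $\Omega(\min\{k^2/\cH^2(f_g,f_r),nk\})$.

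I expect the main obstacle to be making the two information channels cooperate inside a single change of measure: $W$ is observed ``for free'' and must be shown jointly uninformative with the $q$ query answers, which forces a careful choice of cluster sizes so that the per-vertex budget $s\,\cH^2(f_g,f_r)$ stays $\Theta(1)$ while the instance remains a legitimate $k$-clustering of $[n]$ (this constraint is exactly what produces the crossover to the $nk$ regime). The query side is comparatively routine once its search interpretation is fixed, so the delicate work is the joint posterior analysis certifying that $W$ together with $q$ queries leaves the distribution over memberships close to uniform unless $q=\Omega(\min\{k^2/\cH^2,nk\})$; this is the part imported from \cite{mazumdar2016clustering}.
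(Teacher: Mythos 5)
Your proposal is correct and follows essentially the same route as the paper's own argument: the same hard instance with clusters of size $\Theta\bigl(1/\cH^2(f_g,f_r)\bigr)$, the same use of subadditivity of squared Hellinger divergence over the product measure together with $\|P-Q\|_{TV}\le\sqrt{2}\,\cH(P,Q)$ to show $W$ alone cannot resolve a vertex's membership, and the same query-counting step (your per-vertex $\Omega(k)$ Le Cam/Fano budget summed over confusable vertices is the contrapositive form of the paper's averaging argument that $o(k^2/\cH^2)$ total queries leaves some vertex never queried against its true cluster). The crossover to the $\Omega(nk)$ regime is also handled as in the paper.
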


%For details, we refer the interested readers to \cite{mazumdar2016clustering}.
The main idea of proving this lower bound already appears in our recent work  \cite{mazumdar2016clustering}, and we give a brief sketch of the proof below for the interested readers. Strikingly, Hellinger divergence between $f_g$ and $f_r$ appears to be the right distinguishing measure even for analyzing the heuristic algorithms.

%Recall that  there are $k$ clusters in the $n$-vertex graph. That is $\cG(V,E)$ is such that, $V = \sqcup_{i=1}^k V_i$ and
%$E = \{(i,j) : i, j \in V_\ell \text{ for some } \ell\}$. In other words, $\cG$ is a union of at most $k$ disjoint cliques. 
%Let $A \equiv [a_{i,j}]$ be the adjacency matrix of $\cG$. 
%Every entry of the side-information matrix $W$ is generated independently as described in the introduction. 

To show the lower bound we consider an input where one of the clusters are fully formed and given to us. The remaining $k-1$ clusters
each has size $a =  \Big\lfloor \frac{1}{8\cH^2(f_g,f_r)}\Big\rfloor.$ 
We prove the result through contradiction. Assume there exists a randomized algorithm ALG that makes a total of 
$o\Big(\frac{k^2}{\cH(f_g,f_r)^2}\Big)$ queries and assigns all the remaining vertices to correct clusters with high probability.
However, that implies that   the average number of queries ALG  makes to 
assign each of the remaining elements to a cluster must be $o(k)$.

Since there are $k$ clusters, this actually guarantees %(see full proof in Appendix \ref{app:lb})
 the  existence of an element that is not queried with  the
correct cluster $V_{i}$ it is from, and that completely relies on the $W$ matrix for the correct assignment. Now the probability 
distribution (which is a product measure) of $W$, $P_W$, can be one of two  different distributions, $P'_W$ and $P''_W$ depending on whether this vertex belong 
to $V_{i}$ or not. Therefore these two distributions must be far apart  in terms of total variation distance for correct assignment.

However, %from Eq.~\eqref{eq:HTV}, 
the total variation distance between $P'_W$ and $P''_W$
$\|P'_W - P''_W\|_{TV} \le \sqrt{2} \cH(P'_W,P''_W)$. But as both $P'_W,P''_W$ are product measures that can differ in at most
$2a$ random variables (recall the clusters are all of size $a$), we must have, using the properties of the Hellinger divergence, %, using  Eq.~\eqref{eq:HJ},
$ \cH(P'_W,P''_W) \le \sqrt{2a \cH(f_g,f_r)^2} \le \frac12$. This means, $\|P'_W - P''_W\|_{TV} \le \frac1{\sqrt{2}}$, i.e.,  
the two distributions are close enough to be confused with a positive probability - which leads to a contradiction.
%A sketch of the proof of this theorem is provided in Section~\ref{sec:lb_sketch}. 
%, where the ingenious idea was to pose the clustering problem as a hypothesis testing problem. We use their technique liberally, and get a stronger bound. We crucially use the Hellinger distance 
%which helps us to derive a universally stronger bound than \cite{mazumdar2016clustering} which relies on  the Kullback-Leibler (KL) divergence instead. 
%In terms of lower bound, the Hellinger distance covers the important cases when divergence is larger than 1. 
%As shown in our examples, this is also reflected in the upper bounds, as in many cases KL distance is unusable (see, {\bf Dist-2} below).
%Also, in many distributions that we consider below, the KL divergence is infinity, resulting in trivial bounds for \cite{mazumdar2016clustering}. Our bound is robust to such pitfalls. 
Note that, in stead of recovery with positive probability, if we want to ensure exact recovery of the clusters (i.e., with probability 1) we must query each element at least once. This leads to the following corollary. 

\begin{corollary}
\label{cor:lb-main}
Any (possibly randomized) algorithm with the knowledge of $f_g, f_r,$ and the number of clusters $k$,  must perform  at least $\Omega\Big(n+\frac{k^2}{\cH^2(f_g,f_r)}\Big)$ queries, $\cH(f_g,f_r)> 0$,  to return the correct clustering exactly. 
\end{corollary}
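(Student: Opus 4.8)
The plan is to obtain the bound as the combination of two independent lower bounds: a combinatorial \emph{coverage} bound of $\Omega(n)$ that the exactness requirement forces, and the high-probability bound of Theorem~\ref{thm:lb-main}, which any exact algorithm automatically inherits. The key observation enabling the additive form is that a lower bound of $\Omega(n)$ together with a lower bound of $\Omega(M)$ yields a lower bound of $\Omega(\max\{n,M\})$, which equals $\Omega(n+M)$ because $\max\{n,M\}\ge\frac12(n+M)$; so it suffices to prove the two bounds separately.

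First I would argue the coverage bound. I claim that any algorithm that recovers the clustering exactly (with probability $1$) must include every one of the $n$ vertices in at least one query. The natural modeling assumption here is that $f_g$ and $f_r$ share a common support, which is exactly the noisy regime in which $0<\cH(f_g,f_r)<1$; under it no realization of $W$ ever certifies a membership, because moving a single vertex $v$ from its cluster $V_i$ to another cluster $V_j$ only rescales densities by positive factors and hence produces a second clustering consistent with the same matrix. Consequently, if some vertex $v$ were never queried, one can exhibit two inputs that agree on the membership of all elements except $v$; since no query touches $v$, the entire query--answer transcript is identical on both inputs, the algorithm returns the same clustering on both, and it therefore errs on at least one of them with positive probability, contradicting exactness. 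As each query involves at most two vertices, covering all $n$ vertices requires at least $\lceil n/2\rceil=\Omega(n)$ queries.

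Second, since exact recovery is strictly stronger than recovery with high probability, Theorem~\ref{thm:lb-main} applies verbatim and forces at least $\Omega\big(\min\{k^2/\cH^2(f_g,f_r),\,nk\}\big)$ queries. Writing $Q$ for the number of queries performed, the two bounds give $Q=\Omega(n)$ and $Q=\Omega\big(\min\{k^2/\cH^2(f_g,f_r),nk\}\big)$ simultaneously, hence $Q=\Omega\big(n+\min\{k^2/\cH^2(f_g,f_r),nk\}\big)$; in the pertinent regime $k^2/\cH^2(f_g,f_r)\le nk$ the inner minimum is $k^2/\cH^2(f_g,f_r)$ and this collapses to the asserted $\Omega\big(n+k^2/\cH^2(f_g,f_r)\big)$.

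I expect the main obstacle to be handling the minimum in Theorem~\ref{thm:lb-main} honestly: no algorithm makes more than $\binom{n}{2}$ queries, so the clean additive statement can hold only while $k^2/\cH^2(f_g,f_r)$ stays below the trivial $\Theta(nk)$ ceiling, which is precisely the regime identified above; outside it the faithful statement retains the minimum. The secondary delicate point, which the coverage argument must not skip, is the justification that $W$ alone cannot resolve an unqueried vertex with certainty --- this is where the common-support hypothesis on $f_g,f_r$ (equivalently $\cH(f_g,f_r)<1$) is indispensable, since genuinely disjoint supports would turn $W$ into a perfect oracle and void the $\Omega(n)$ term entirely.
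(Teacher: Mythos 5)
Your proposal is correct and follows essentially the same route as the paper: the paper obtains the corollary in a single sentence by adding to Theorem~\ref{thm:lb-main} the observation that exact (probability-$1$) recovery forces every element to be queried at least once, hence an extra $\Omega(n)$ term, exactly your decomposition into a coverage bound plus the high-probability bound. Your write-up simply supplies details the paper leaves implicit --- the indistinguishability argument behind the $\Omega(n)$ term together with the common-support ($\cH(f_g,f_r)<1$) caveat, and the honest handling of the $\min\{k^2/\cH^2(f_g,f_r),\,nk\}$ --- so it is, if anything, more careful than the paper's own justification.
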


\section{Main results: Analysis of the heuristics}
We provide expressions for query complexities for both the edge ordering and the node ordering algorithms. It turns out that the following quantity plays a crucial role in the analysis of both:
$$
L_{g,r}(t) \equiv \int_0^1\Big(\int_0^r f_g(y) dy\Big)^tf_r(x)dx.
$$
\begin{theorem}[The Edge ordering]\label{thm:edge}
The query complexity for \cc~ with the edge ordering algorithm is at most,
$$
n+\min_{1\le s \le n} \Big[\binom{k}{2}s^2+ n\sum_{i=1}^{k} \sum_{\ell=s}^{|V_i|} \ell L_{g,r}\Big(\binom{\ell}2\Big)\Big].
$$
\end{theorem}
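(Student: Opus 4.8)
The plan is to account separately for the positive (``yes''/green) and negative (``no''/red) queries the algorithm issues, and to show that the positive queries contribute the leading $n$ while the negative queries contribute the bracketed minimum. First I would dispose of the positive queries. Since the oracle is perfect, every positive query or positive inference links two elements of the same true cluster, so the partial clusters the algorithm maintains are always subsets of the $V_i$'s. Processing the within-cluster edges in decreasing similarity and querying an edge only when its two endpoints lie in different current components is exactly Kruskal's rule building a spanning forest of each $V_i$; hence the algorithm issues precisely $|V_i|-1$ positive queries inside $V_i$ and infers the rest by transitivity. Summing over clusters gives $\sum_i(|V_i|-1)=n-k\le n$ positive queries, which is the additive $n$. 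Note this count is unaffected by the red edges interleaved in the global order, since a cross-cluster edge never merges two same-cluster components.

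The core is bounding the negative queries. Fix the realization of $W$ and view the algorithm as a union-find process driven by the global decreasing-similarity order. A cross-cluster edge $(u,v)$ with $u\in V_i,\,v\in V_j$ is queried iff, when it is processed, the current component of $u$ and that of $v$ have not already been separated by an earlier negative query; equivalently $u$ ``freshly cross-links'' to the current component of $v$. Thus the number of negative queries equals the number of fresh cross-links, and the plan is to charge each fresh cross-link to the clique of the sub-cluster being cross-linked and to split on its size relative to a threshold $s$.

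Introduce $s$ and split the fresh cross-links by the size $\ell$ of the sub-cluster they touch. For sub-clusters of size $\ell\ge s$ the key probabilistic step is: a fresh negative query that overtakes a size-$\ell$ sub-cluster $C\subseteq V_i$ forces some red edge from the external vertex into $C$ to exceed all $\binom{\ell}{2}$ internal (green) similarities of $C$ — otherwise $C$ would already be internally connected and cross-linked by the time that edge is processed. A single red edge beats all $\binom{\ell}{2}$ green similarities with probability exactly $L_{g,r}\!\big(\binom{\ell}{2}\big)$, so a union bound over the $\le\ell$ vertices of $C$ bounds the chance of such a large fresh cross-link by $\ell\,L_{g,r}\!\big(\binom{\ell}{2}\big)$. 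Summing over sizes $\ell=s,\dots,|V_i|$, over clusters $i$, and over the at most $n$ external vertices yields $n\sum_{i=1}^{k}\sum_{\ell=s}^{|V_i|}\ell\,L_{g,r}\!\big(\binom{\ell}{2}\big)$. For sub-clusters of size $\ell<s$ I would discard this (now wasteful) probabilistic estimate and bound the remaining negative queries combinatorially: per unordered pair of clusters, the cross-links occurring while both touched components have size below $s$ number at most $s^2$, giving $\binom{k}{2}s^2$ overall. Taking the better of the two accounts over $s$ produces the minimum in the statement, and adding the $n$ positive queries completes the bound. As a sanity check, $s=1$ reproduces the trivial $O(nk)$ guarantee, since the $\ell=1$ term equals $L_{g,r}(0)=1$.

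The hard part, where I would spend most of the care, is the coupling in the third paragraph: making the implication ``large fresh cross-link $\Rightarrow$ a red edge overtakes the entire $\binom{\ell}{2}$-clique'' precise in the presence of multi-hop transitive inference, since a cross-cluster pair can be resolved through a chain passing via several clusters rather than only through direct edges into $C$. I expect that one may pessimistically ignore some of these inference opportunities — which only inflates the query count and so keeps the estimate a valid upper bound — and that the charging must assign each negative query to exactly one sub-cluster so the two regimes neither double-count nor omit queries. Verifying the $\binom{k}{2}s^2$ small-scale count is the other delicate point; I anticipate a forest/charging argument on the bipartite ``cross-link graph'' between the two clusters' small components, using that each new small-scale negative query joins two previously unlinked parts.
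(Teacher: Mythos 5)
Your green-edge count (a Kruskal-type spanning forest argument giving $n-k\le n$ positive queries) and the two-regime shape of the red-edge bound with an optimized threshold $s$ match the paper, and your final expression is the paper's. But the quantity you split on is wrong: in the paper, $\ell$ is the \emph{number of connected components} into which a cluster $V_i$ is currently broken, not the \emph{size} of a sub-cluster, and this difference breaks both halves of your argument. The paper's probabilistic event is that a queried red edge exceeds the (at least) $\binom{\ell}{2}$ green edges running \emph{between distinct components} of $V_i$ --- e.g.\ the edges among one representative per component. Those edges are necessarily still unprocessed when the red edge is reached (a processed green edge between two components would have been queried, since it cannot be negatively inferred, and would have merged them), so the red edge genuinely beats all $\binom{\ell}{2}$ of them, giving the factor $L_{g,r}\big(\binom{\ell}{2}\big)$ against the at most $n\ell$ red edges incident to the representatives. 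Your event --- a red edge from an external vertex beating all $\binom{\ell}{2}$ \emph{internal} similarities of a size-$\ell$ connected sub-cluster $C$ --- cannot occur at all for a component that has already formed: the green edges that built $C$ were processed \emph{before} the red edge, hence are larger than it. Your justification (``otherwise $C$ would already be internally connected and cross-linked'') is also logically invalid: a red edge failing to exceed \emph{all} internal similarities of $C$ only means it comes after \emph{some} internal edge, not after a spanning set of them, so nothing forces $C$ to be connected, let alone cross-linked, by the time it is processed. (There is a further counting slip: for fixed $u$, $i$, $\ell$ there may be many disjoint size-$\ell$ sub-clusters over the history, so your sum over $(u,i,\ell)$ alone does not account for all charged events.)

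The small regime fails outright as stated. The claim ``per unordered pair of clusters, the cross-links occurring while both touched components have size below $s$ number at most $s^2$'' is false: if the similarity draws place all red values above all green values (an event of positive probability under, say, uniform noise, so fatal to a bound you announce as combinatorial, i.e.\ per realization), the algorithm processes every cross edge while all components are still singletons, and no transitive inference is ever available --- between $V_i$ and $V_j$ all $|V_i|\,|V_j|\gg s^2$ pairs get queried, each touching only components of size $1<s$. The paper's combinatorial bound is conditioned on the number of components instead: once $V_i$ and $V_j$ each have at most $s$ components, every further query between them consumes a distinct (component, component) pair, a queried pair together with all its super-pairs is thereafter inferable, and since components only merge there are only $O(s^2)$ such pairs left to charge; any red query not in this regime has an endpoint cluster with more than $s$ components and is covered by the probabilistic regime. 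So the repair is exactly to re-index both regimes by the number of remaining components (phases $\ell=|V_i|,\dots,s$ per cluster, plus the $\binom{k}{2}s^2$ tail once every cluster has at most $s$ components); with that substitution your outline becomes the paper's proof.
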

The proof of this theorem is provided in Section~\ref{sec:edge}.

\begin{theorem}[The Node ordering]\label{thm:node}
The query complexity for \cc~ with the node ordering algorithm is at most,
$$n+\sum_{i=1}^{k}\sum_{s=1}^{|V_i|}  \min\{k, (n-|V_i|)L_{g,r}(s)\}.$$
\end{theorem}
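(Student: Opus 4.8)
The plan is to bound the \emph{expected} number of queries node by node, then group the contributions by cluster and by the order in which a cluster's elements are inserted. Fix a cluster $V_i$ and suppose its elements are inserted in the order $u_1,u_2,\dots,u_{|V_i|}$ (the global node ordering induces some order on them). When $u_s$ is processed, exactly $s-1$ of its true cluster-mates have already been placed into the unique cluster representing $V_i$ (unique by the perfect oracle and transitivity). The first observation is that transitivity collapses all queries against any single existing cluster into at most one: the instant $u_s$ gets a negative answer against one member of a cluster $C$, all of $C$ is ruled out, and the instant it gets a positive answer it is assigned and we stop. Consequently the algorithm effectively probes the existing clusters in decreasing order of their best (maximum-similarity) representative to $u_s$, halting at the first positive response, which occurs precisely when it reaches the cluster of $V_i$.

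Let $M=\max_{t<s} w_{u_s,u_t}$ be the largest similarity between $u_s$ and an already-placed member of $V_i$; since these are same-cluster pairs, $M$ is the maximum of $s-1$ i.i.d.\ draws from $f_g$. The number of \emph{wasted} (negative) queries $u_s$ incurs before its positive one equals the number of \emph{other} clusters having an already-placed member whose similarity to $u_s$ exceeds $M$, which is at most the number of already-placed foreign elements carrying such a similarity. Each such cross-cluster similarity is an independent draw from $f_r$ and is independent of the green edges defining $M$. Hence, conditioning on $M$ and summing over the at most $n-|V_i|$ foreign elements, the conditional expected number of wasted queries is at most $(n-|V_i|)\Pr[X>M\mid M]$ with $X\sim f_r$; averaging over $M$ gives the unconditional bound $(n-|V_i|)\,\mathbb{E}\big[F_g(X)^{\,s-1}\big]=(n-|V_i|)\,L_{g,r}(s-1)$, where $F_g(x)=\int_0^x f_g(y)\,dy$ and the last equality is exactly the definition of $L_{g,r}$. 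Because the number of existing clusters never exceeds $k$ (the oracle is perfect, so no spurious cluster is created), the number of wasted queries is also deterministically below $k$, so its expectation is at most $\min\{k,(n-|V_i|)L_{g,r}(s-1)\}$.

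Adding the single positive query that terminates $u_s$'s processing, the expected cost of $u_s$ is at most $1+\min\{k,(n-|V_i|)L_{g,r}(s-1)\}$. Summing over $s$ and over all clusters, the $|V_i|$ unit terms accumulate to $\sum_i|V_i|=n$, and the residual double sum is the claimed $\sum_i\sum_s\min\{k,(n-|V_i|)L_{g,r}(s)\}$ after the index shift. Note that the node pre-ordering by empirical cluster size plays no role in this worst-case upper bound, since we have already replaced the number of placed foreign elements by its maximum $n-|V_i|$.

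The main obstacle is making the conditioning rigorous: the set of already-placed elements, the value $M$, and the foreign similarities all arise from a single random similarity matrix together with a data-dependent adaptive insertion order, so one must verify that for a pair $(u_s,\,\text{foreign }v)$ the entry $w_{u_s,v}$ is genuinely a fresh $f_r$ draw, independent of both $M$ and of the event ``$v$ has already been placed,'' so that the per-edge probability $\Pr[X>M]$ and the count $n-|V_i|$ decouple and combine by linearity of expectation. A secondary point is the bookkeeping turning $1+\min\{k,\cdot\}$ into an additive $n$ plus a clean $\min$ (via $\min\{k,1+a\}\le 1+\min\{k,a\}$), together with the index shift relating the naturally arising $L_{g,r}(s-1)$ to the stated $L_{g,r}(s)$.
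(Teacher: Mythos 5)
Your proposal follows essentially the same route as the paper's own proof: ignore the similarity-based node pre-ordering, observe that transitivity gives at most one query per existing cluster, bound the expected number of wasted red queries at each insertion by $\min\{k,\,(n-|V_i|)\,L_{g,r}(\cdot)\}$ via the probability that a foreign $f_r$-draw exceeds the maximum of the already-placed green draws, and sum over insertions and clusters plus $n$ for the positive queries. The index-shift concern you flag ($L_{g,r}(s-1)$ for the $s$-th insertion versus the stated $L_{g,r}(s)$) is genuine but is present in the paper as well---the paper simply lets $s$ denote the current subcluster size and sums $s=1,\dots,|V_i|$, silently dropping the first insertion's up-to-$k$ cost (an $O(k^2)$ total slack that is dominated by the other terms and does not affect any downstream corollary).
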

The proof of this theorem is provided in Section~\ref{sec:node}.
\subsection{Illustration: $\epsilon$-biased Uniform Noise Model}

We consider two distributions for $f_r$ and $f_g$ which are only $\epsilon$ far in terms of total variation distance from the uniform distribution. However, if we consider Hellinger distance, then {\bf Dist-1} is closer to uniform distribution than {\bf Dist-2}. These two distributions will be used
as representative distributions to illustrate the potentials of the edge ordering and node ordering algorithms. In both cases, substituting $\epsilon$ with $0$, we get uniform distribution which contains no information regarding the similarities of the entries.

\vspace{0.06in}
\noindent {\bf Dist-1.} Consider the following probability density functions for $f_r$ and $f_g$, where $x \in [0,1]$, and $0 <\epsilon <1/2,$
\[ f_r(x) =
  \begin{cases}
    (1+\epsilon)       &  \text{if } x < \frac{1}{2}\\
    (1-\epsilon)  &  \text{if } x \geq \frac{1}{2}\\
  \end{cases}
\hspace{0.1in} f_g(x) =
  \begin{cases}
    (1-\epsilon)       &  \text{if } x < \frac{1}{2}\\
    (1+\epsilon)  &  \text{if } x \geq \frac{1}{2}.
  \end{cases}
\]
Note that $\int_{0}^{1} f_r(x) \, \diff x=\int_{0}^{1/2} (1+\epsilon)\,  \diff x+\int_{1/2}^{1} (1-\epsilon)\,  \diff x=1$. Similarly, $\int_{0}^{1} f_g(x) \, \diff x=1$, that is they represent valid probability density functions.
We have, $\cH^2(f_g,f_r) = 1-\int_0^1\sqrt{1-\epsilon^2} \diff x = 1-\sqrt{1-\epsilon^2} \approx \epsilon^2/2.$

% $\mu_r\equiv\int_{0}^{1} x f_r(x) \, \diff x=\frac{(1+\epsilon)}{8}+\frac{3(1-\epsilon)}{8}=\frac{2-\epsilon}{4}=\frac{1}{2}-\frac{\epsilon}{4}$ and
%$\mu_g\equiv\int_{0}^{1} x f_g(x) \, \diff x=\frac{(1-\epsilon)}{8}+\frac{3(1+\epsilon)}{8}=\frac{2+\epsilon}{4}=\frac{1}{2}+\frac{\epsilon}{4}$. Here $\mu_r$ and $\mu_g$ respectively represent the mean of the two distributions.

\vspace{0.06in}
\noindent {\bf Dist-2.} Now consider the following probability density functions for $f_r$ and $f_g$ with $0<\epsilon< 1/2$.
\begin{align*}
f_r(x)= \frac{1}{1-\epsilon},  0 \leq x \leq 1-\epsilon, ~~~
f_g(x)=\frac{1}{1-\epsilon},  \epsilon \leq x \leq 1.
\end{align*}
Again, $\int_{0}^{1} f_r(x) \, \diff x=\int_{0}^{1-\epsilon} \frac{1}{(1-\epsilon)}\,  \diff x=1$. Similarly, $\int_{0}^{1} f_g(x) \, \diff x=\int_{\epsilon}^{1} f_g(x) \, \diff x=1$, that is they represent valid probability density functions.
We have, $\cH^2(f_g,f_r) = 1 - \int_{\epsilon}^{1-\epsilon}\frac{1}{1-\epsilon} \diff x = \frac{\epsilon}{1-\epsilon} \approx \epsilon$. 
%$\mu_r\equiv\int_{0}^{1} x f_r(x) \, \diff x=\int_{0}^{1-\epsilon} \frac{x}{(1-\epsilon)} \, \diff x=\frac{1}{2}-\frac{\epsilon}{2}$, and
%$\mu_g\equiv\int_{0}^{1} x f_g(x) \, \diff x=\int_{\epsilon}^{1} \frac{x}{(1-\epsilon)} \, \diff x=\frac{1}{2}+\frac{\epsilon}{2}$.

We have the following results for these two distributions.
\begin{proposition}[Lower bound]\label{thm:lbs}
Any (possibly randomized) algorithm for \cc,~ must make $\Omega(n+\frac{k^2}{\epsilon^2})$ queries for {\bf Dist-1} and $\Omega(n+\frac{k^2}{\epsilon})$ queries for {\bf Dist-2}, to recover the clusters exactly (with probability 1).
\end{proposition}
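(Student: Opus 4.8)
The plan is to obtain Proposition~\ref{thm:lbs} as a direct specialization of Corollary~\ref{cor:lb-main}, which already furnishes the general exact-recovery lower bound $\Omega\big(n + k^2/\cH^2(f_g,f_r)\big)$ whenever $\cH(f_g,f_r) > 0$. Both {\bf Dist-1} and {\bf Dist-2} satisfy $\cH(f_g,f_r) > 0$ for every $0 < \epsilon < 1/2$, so the corollary applies, and the entire task collapses to evaluating the squared Hellinger divergence for each of the two noise models and substituting it into this expression. No new information-theoretic reasoning is required beyond what Corollary~\ref{cor:lb-main} supplies.

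The first computational step is to record $\cH^2(f_g,f_r)$ for each model; these values are in fact already derived in the definitions of the distributions above. For {\bf Dist-1}, applying $\cH^2(f_g,f_r) = \frac12\int_0^1(\sqrt{f_g(x)}-\sqrt{f_r(x)})^2\,\diff x$ and exploiting the symmetry of the two densities across $x=1/2$ gives $\cH^2(f_g,f_r) = 1 - \sqrt{1-\epsilon^2}$. For {\bf Dist-2}, the two densities equal $1/(1-\epsilon)$ and overlap only on $[\epsilon,\,1-\epsilon]$, so the same definition yields $\cH^2(f_g,f_r) = 1 - \frac{1}{1-\epsilon}\int_\epsilon^{1-\epsilon}\diff x = \frac{\epsilon}{1-\epsilon}$.

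The one step meriting a little care is converting the informal $\approx$ estimates of the excerpt into rigorous two-sided $\Theta$-bounds valid across the whole range $0 < \epsilon < 1/2$, since the lower bound in Corollary~\ref{cor:lb-main} depends on $\cH^2$ through its reciprocal and a mere upper estimate would not suffice. For {\bf Dist-1} I would rationalize via $1 - \sqrt{1-\epsilon^2} = \epsilon^2/(1+\sqrt{1-\epsilon^2})$, which pins $\cH^2(f_g,f_r)$ between $\epsilon^2/2$ and $\epsilon^2$, hence $\cH^2(f_g,f_r) = \Theta(\epsilon^2)$. For {\bf Dist-2}, $\epsilon/(1-\epsilon)$ lies between $\epsilon$ and $2\epsilon$ on this range, so $\cH^2(f_g,f_r) = \Theta(\epsilon)$. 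Substituting these two $\Theta$-estimates into the bound of Corollary~\ref{cor:lb-main} immediately produces $\Omega(n + k^2/\epsilon^2)$ for {\bf Dist-1} and $\Omega(n + k^2/\epsilon)$ for {\bf Dist-2}.

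Consequently there is no genuine obstacle: the heavy lifting resides entirely in the information-theoretic argument behind Corollary~\ref{cor:lb-main}, and the proposition is a straightforward evaluation-and-substitution. The only point one must not skip is verifying that the approximations are genuine $\Theta$-estimates rather than one-sided bounds, which the elementary algebra above confirms.
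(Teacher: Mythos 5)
Your proposal is correct and takes essentially the same route as the paper, which proves Proposition~\ref{thm:lbs} in one line by citing Theorem~\ref{thm:lb-main} and Corollary~\ref{cor:lb-main} and plugging in the already-computed divergences $\cH^2(f_g,f_r)=1-\sqrt{1-\epsilon^2}$ for {\bf Dist-1} and $\cH^2(f_g,f_r)=\frac{\epsilon}{1-\epsilon}$ for {\bf Dist-2}. One small remark: since the bound is $\Omega\big(n+k^2/\cH^2\big)$, the direction actually required is an \emph{upper} bound $\cH^2 = O(\epsilon^2)$ (resp.\ $O(\epsilon)$) --- contrary to your aside that ``a mere upper estimate would not suffice'' --- but your two-sided $\Theta$-estimates of course supply this, so the argument stands.
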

The proof of this theorem follows from Theorem \ref{thm:lb-main}, Corollary \ref{cor:lb-main}, and by plugging in the Hellinger distances between $f_g, f_r$ in both cases.
%Note that, the lower bound of \cite{mazumdar2016clustering} is  ineffective in the case of {\bf Dist-2}.

The following set of results are  corollaries of Theorem \ref{thm:edge}.
\begin{proposition}[Uniform noise (no similarity information)]
\label{prop:uniform}
Under the uniform noise model where $f_g, f_r \sim Unif[0,1]$, the edge ordering algorithm has query complexity 
$O(nk \log\frac{n}{k})$ for \cc.
%is a 
%$O(\log\frac{\sqrt{n}}{k})$-approximation.
%$\min{\{O(k\ln{\frac{n}{k}}),O(\sqrt{n}\ln{n})\}}$-approximation.
\end{proposition}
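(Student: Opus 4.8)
The plan is to specialize Theorem~\ref{thm:edge} to the uniform case by evaluating $L_{g,r}$ in closed form and then making a concrete choice of the free parameter $s$. First I would compute $L_{g,r}(t)$ when $f_g \equiv f_r \equiv 1$ on $[0,1]$: the inner integral equals $\int_0^x f_g(y)\,dy = x$, so $L_{g,r}(t) = \int_0^1 x^t\,dx = \frac{1}{t+1}$. Substituting $t = \binom{\ell}{2}$ gives $L_{g,r}\big(\binom{\ell}{2}\big) = \frac{1}{\binom{\ell}{2}+1} = \frac{2}{\ell^2-\ell+2}$, and hence the per-term contribution is $\ell\, L_{g,r}\big(\binom{\ell}{2}\big) = \frac{2\ell}{\ell^2-\ell+2} = \Theta(1/\ell)$. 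This is the crucial simplification: the summand decays like $1/\ell$, so each inner sum over a cluster becomes a truncated harmonic sum.

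Next I would choose $s=1$ in the minimization of Theorem~\ref{thm:edge}, which is the simplest admissible choice and, as the estimate below shows, already achieves the claimed bound. With $s=1$ the inner sum over a single cluster satisfies $\sum_{\ell=1}^{|V_i|} \frac{2\ell}{\ell^2-\ell+2} \le 4\sum_{\ell=1}^{|V_i|}\frac1\ell = O(\log|V_i|)$, using the elementary inequality $\ell^2-\ell+2 > \ell^2/2$. Summing over the $k$ clusters and invoking concavity of the logarithm (Jensen's inequality) under the constraint $\sum_{i} |V_i| = n$ yields $\sum_{i=1}^k \log|V_i| \le k\log\frac{n}{k}$. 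Therefore the second bracketed term of Theorem~\ref{thm:edge} is bounded by $n\cdot O\big(k\log\frac{n}{k}\big) = O\big(nk\log\frac{n}{k}\big)$.

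Finally I would check that the remaining terms are dominated. With $s=1$ the first term is $\binom{k}{2} = O(k^2)$, and since $k \le n$ this is $O(nk)$, which is absorbed into $O\big(nk\log\frac{n}{k}\big)$ in the natural regime $k \le n/e$ where $\log\frac{n}{k} \ge 1$; the additive $n$ from Theorem~\ref{thm:edge} is likewise dominated. Combining the three estimates gives the query complexity $O\big(nk\log\frac{n}{k}\big)$. The main obstacle is not any single calculation but rather two aggregation points: (i) extracting the asymptotics $\ell L_{g,r}\big(\binom{\ell}{2}\big) = \Theta(1/\ell)$ so that the inner sum is recognizably harmonic, and (ii) summing the per-cluster logarithmic terms correctly, since the naive bound $\log|V_i| \le \log n$ would only yield $O(nk\log n)$ — the concavity step is what sharpens $\log n$ into $\log\frac{n}{k}$. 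As a sanity check one can verify that the balancing choice $s = \Theta(\sqrt{n/k})$, which equalizes the two bracketed terms, produces the same order, confirming that $s=1$ loses nothing beyond the stated bound.
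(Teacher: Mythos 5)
Your proof is correct and follows essentially the same route as the paper: specialize Theorem~\ref{thm:edge}, compute $L_{g,r}(t)=\frac{1}{t+1}$ for the uniform case, recognize that $\ell L_{g,r}\big(\binom{\ell}{2}\big)=\Theta(1/\ell)$ makes the inner sums harmonic, and sharpen $\sum_i \log|V_i|$ to $k\log\frac{n}{k}$ via concavity of the logarithm. The only difference is your choice $s=1$ versus the paper's $s=\sqrt{n/k}$ in the minimization of Theorem~\ref{thm:edge}; both yield the same $O\big(nk\log\frac{n}{k}\big)$ bound (your choice even sidesteps the $\frac{1}{\ell-1}$-type singularity at small $\ell$ that the paper's harmonic estimate implicitly requires $s\ge 3$ to avoid), and both share the same implicit regime assumption $\log\frac{n}{k}=\Omega(1)$ needed to absorb the $O(nk)$ terms.
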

\begin{proof}
Since $f_g = f_r$,  the similarity matrix $W$ amounts to no information at all. We know that in this situation, one must make $O(nk)$ queries for the correct solution of \cc.

In this situation, a straight-forward calculation shows that,
$
L_{g,r}(t) = \frac{1}{t+1}.
$
This means, ignoring the first $n$ term, from Theorem \ref{thm:edge}, the edge ordering algorithm makes at most
$
\min_{1\le s \le n} \Big[\binom{k}{2}s^2+ n\sum_{i=1}^{k} \sum_{\ell=s}^{|V_i|} \ell \frac{2}{\ell(\ell-1) +2}\Big] 
\le  \min_{1\le s \le n} \Big[\frac{k^2s^2}{2}+ 2n\sum_{i=1}^{k} \sum_{\ell=s}^{|V_i|} \frac{1}{\ell-1}\Big]
$
number of queries. By bounding  the harmonic series and using the concavity of log, we have the number of queries made by the edge ordering algorithm is at most
$
\min_{1\le s \le n} \Big[\frac{k^2s^2}{2}+ 2n\sum_{i=1}^{k} \ln\frac{|V_i|-1}{s-2}\Big] \\
\le \min_{1\le s \le n} \Big[\frac{k^2s^2}{2}+ 2nk \ln\frac{n-k}{k(s-2)}\Big] = O(nk \log\frac{n}{k}), 
$ 
where we have substituted $s  = \sqrt{n/k}$. % Comparing with the lower bound, this therefore is a $O(\log\frac{\sqrt{n}}{k})$-approximation.
\end{proof}

\begin{proposition}[{\bf Dist-1}]
\label{prop:dist-1}
When $f_g, f_r \sim $ {\bf Dist-1}, the edge ordering algorithm has query complexity $O(nk(1-2\epsilon) \log\frac{n}{k})$ for \cc.
%provides a $\min\{O(k(1-2\epsilon)\ln\frac{\sqrt{n}}{k}),O(\epsilon^2(1-2\epsilon)n\ln\frac{\sqrt{n}}{k}\}$-approximation for \cc.
\end{proposition}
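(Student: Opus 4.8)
The plan is to reproduce the argument of Proposition~\ref{prop:uniform} almost verbatim, the only new ingredient being the evaluation of $L_{g,r}(t)$ for \textbf{Dist-1}. Concretely, I would (i) compute the relevant integral $L_{g,r}(t)$ in closed form, (ii) isolate its leading $1/(t+1)$ behaviour together with the constant it carries, and (iii) substitute this estimate into Theorem~\ref{thm:edge} and optimize over $s$ exactly as in the uniform case.

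First I would write down the cumulative distribution function $F_g(x)=\int_0^x f_g(y)\,dy$ of $f_g$. Since $f_g$ is piecewise constant, $F_g$ is piecewise linear: $F_g(x)=(1-\epsilon)x$ for $x<\frac{1}{2}$ and $F_g(x)=(1+\epsilon)x-\epsilon$ for $x\ge\frac{1}{2}$. A useful fact that drops out at once is $F_g(x)\le x$ on all of $[0,1]$, since each linear piece lies below the diagonal; this lets me dominate the integrand by the uniform one. Substituting into $L_{g,r}(t)=\int_0^1 F_g(x)^t f_r(x)\,dx$ and splitting the integral at $\frac{1}{2}$, each half becomes the integral of a power of a linear function against a constant value of $f_r$, which a linear change of variables turns into $\int u^t\,du$. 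Carrying this out yields a closed form of the shape $L_{g,r}(t)=\frac{1-\epsilon}{(1+\epsilon)(t+1)}+(\text{terms exponentially small in }t)$, the corrections carrying factors such as $\big(\frac{1-\epsilon}{2}\big)^{t+1}$; alternatively, using $F_g(x)\le x$ gives the clean one-line upper bound $L_{g,r}(t)\le\frac{(1-\epsilon)+\epsilon\,2^{-t}}{t+1}$. For the arguments $t=\binom{\ell}{2}$ that actually appear, which grow quadratically in $\ell$, the exponential corrections are negligible and the surviving coefficient in front of $1/(t+1)$ is $\frac{1-\epsilon}{1+\epsilon}$, which equals $1-2\epsilon$ to leading order in $\epsilon$.

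With this estimate in hand the remainder is identical to Proposition~\ref{prop:uniform} up to the constant factor. I would bound $\ell\,L_{g,r}\big(\binom{\ell}{2}\big)\le (1-2\epsilon)\frac{2\ell}{\ell(\ell-1)+2}\le(1-2\epsilon)\frac{2}{\ell-1}$, sum the harmonic series using concavity of $\log$ to get $\sum_{\ell=s}^{|V_i|}\ell\,L_{g,r}\big(\binom{\ell}{2}\big)\le 2(1-2\epsilon)\ln\frac{|V_i|-1}{s-2}$, sum over the $k$ clusters, and plug into Theorem~\ref{thm:edge} to reach $\min_{s}\big[\frac{k^2s^2}{2}+2n(1-2\epsilon)k\ln\frac{n-k}{k(s-2)}\big]$. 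The one place where the optimization differs from the uniform case is that the two terms now balance at $s=\Theta\big(\sqrt{(1-2\epsilon)n/k}\big)$ rather than $\sqrt{n/k}$; this smaller choice of $s$ forces the budget term $\frac{k^2s^2}{2}=\Theta\big((1-2\epsilon)nk\big)$ to also carry the factor $(1-2\epsilon)$, so that both terms become $O\big(nk(1-2\epsilon)\log\frac{n}{k}\big)$, giving the claim.

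The only genuine work beyond Proposition~\ref{prop:uniform} is the two-piece evaluation of $L_{g,r}(t)$ and the verification that the exponentially small correction terms do not disturb the leading $(1-2\epsilon)/(t+1)$ behaviour at $t=\binom{\ell}{2}$; the harmonic-sum bound and the balancing of $s$ are reused directly. I expect the integral bookkeeping to be the main point requiring care, since the coefficient produced by the clean bound is really $(1-\epsilon)$ and the sharp asymptotic coefficient is $\frac{1-\epsilon}{1+\epsilon}$, both of which reduce to $1-2\epsilon$ only to first order in $\epsilon$; keeping track of which constant survives in front of $1/(t+1)$, and of the adjusted balance point for $s$, is where one must be most attentive.
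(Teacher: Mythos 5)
Your proposal is correct and follows essentially the same route as the paper: the paper's appendix (Section~\ref{app:lgrt}) carries out exactly your two-piece evaluation of $L_{g,r}(t)$, obtaining $\frac{1-\epsilon}{(1+\epsilon)(t+1)}\bigl(1+\epsilon\bigl(\frac{1-\epsilon}{2}\bigr)^{t-1}\bigr)\approx\frac{1-2\epsilon}{t+1}$ for small $\epsilon$ and large $t$, and the proof of the proposition then simply reruns the uniform-noise argument of Proposition~\ref{prop:uniform} with this coefficient. Two minor remarks: your rebalanced choice $s=\Theta\bigl(\sqrt{(1-2\epsilon)n/k}\bigr)$ is actually more careful than the paper, which implicitly keeps $s=\sqrt{n/k}$ and so leaves the $\binom{k}{2}s^2$ term without the $(1-2\epsilon)$ factor (immaterial in the small-$\epsilon$ regime both you and the paper assume); and your aside that the ``clean'' bound's coefficient $1-\epsilon$ reduces to $1-2\epsilon$ at first order is not right ($1-\epsilon$ and $1-2\epsilon$ already differ at order $\epsilon$), so only the sharp closed-form coefficient $\frac{1-\epsilon}{1+\epsilon}$ delivers the stated constant --- which your main argument correctly uses.
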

\begin{proof}
The proof is identical to the above.  For small $\epsilon$, we have $L_{g,r}(t) \approx \frac{1-\epsilon}{(1+\epsilon)(t+1)}\approx \frac{1-2\epsilon}{t+1}$ (see, Section~\ref{app:lgrt}).
The algorithm queries at most $O(nk(1-2\epsilon) \log\frac{n}{k})$ edges. 
%Comparison with the lower bound of Theorem \ref{thm:lbs} establishes the proposition.
\end{proof}
\begin{proposition} [{\bf Dist-2}]
\label{prop:dist-2}
When $f_g, f_r \sim $ {\bf Dist-2}, the edge ordering algorithm has query complexity $O\left(n+\frac{k^2\log{n}}{\epsilon} \right)$ for \cc.
\end{proposition}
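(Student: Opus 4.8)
The plan is to follow the same template as Propositions~\ref{prop:uniform} and~\ref{prop:dist-1}: evaluate $L_{g,r}(t)$ in closed form for \textbf{Dist-2}, substitute it into the bound of Theorem~\ref{thm:edge}, and then optimize over the free parameter $s$. First I would write $L_{g,r}(t)=\int_0^1 F_g(x)^t f_r(x)\,dx$, where $F_g$ is the cumulative distribution function of $f_g$ (this is consistent with the uniform-case value $L_{g,r}(t)=1/(t+1)$ used above). Since $f_g$ is uniform on $[\epsilon,1]$ we have $F_g(x)=\frac{x-\epsilon}{1-\epsilon}$ on $[\epsilon,1]$ and $F_g(x)=0$ for $x<\epsilon$, while $f_r=\frac{1}{1-\epsilon}$ is supported on $[0,1-\epsilon]$. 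The integrand is therefore supported on $[\epsilon,1-\epsilon]$, and the substitution $u=\frac{x-\epsilon}{1-\epsilon}$ produces the clean closed form
\[
L_{g,r}(t)=\int_0^{\beta}u^t\,du=\frac{\beta^{\,t+1}}{t+1},\qquad \beta\equiv\frac{1-2\epsilon}{1-\epsilon}.
\]
The crucial qualitative difference from the previous two propositions is that here $\beta<1$, so $L_{g,r}(t)$ carries an \emph{exponentially} decaying factor $\beta^{\,t+1}$ rather than only a polynomial decay in $t$.

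Next I would exploit this exponential decay. From $1-x\le e^{-x}$ we get $\beta=1-\frac{\epsilon}{1-\epsilon}\le e^{-\epsilon}$, and for $\ell\ge 2$ the bound $\binom{\ell}{2}\ge \ell^2/4$ yields
\[
\ell\,L_{g,r}\!\Big(\binom{\ell}{2}\Big)=\frac{\ell\,\beta^{\binom{\ell}{2}+1}}{\binom{\ell}{2}+1}\le \frac{4\,e^{-\epsilon\ell^2/4}}{\ell}.
\]
Thus each term of the inner double sum in Theorem~\ref{thm:edge} decays like a Gaussian in $\ell$, which is what lets us afford a far smaller threshold than in the uniform case. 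I would set $s=\lceil 2\sqrt{(\log n)/\epsilon}\,\rceil$, so that $e^{-\epsilon s^2/4}\le \tfrac1n$. With this choice the first term becomes $\binom{k}{2}s^2=O\!\big(\frac{k^2\log n}{\epsilon}\big)$, which is exactly the target order.

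It then remains to show the second term is negligible. The plan is to bound the tail $\sum_{\ell\ge s}\frac{4e^{-\epsilon\ell^2/4}}{\ell}$ by $\frac{4}{s}\sum_{\ell\ge s}e^{-\epsilon\ell^2/4}$ and to estimate the remaining Gaussian sum via $\ell^2\ge s^2+2s(\ell-s)$, which turns it into a geometric series bounded by $e^{-\epsilon s^2/4}\big(1+\tfrac{2}{\epsilon s}\big)$. Multiplying by the prefactor $n$ and by the at most $k$ nonempty clusters (each inner sum is dominated by the infinite tail), and then using $e^{-\epsilon s^2/4}\le\tfrac1n$ together with $s^2\ge 4(\log n)/\epsilon$, the whole second term collapses to $O(k)$. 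Adding the three contributions gives the claimed $O\!\big(n+\frac{k^2\log n}{\epsilon}\big)$.

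I expect the main obstacle to be controlling this Gaussian tail uniformly across all values of $\epsilon$: when $\epsilon$ is small the product $\epsilon s$ need not be bounded below, so the geometric ratio $e^{-\epsilon s/2}$ is close to $1$ and the naive estimate $\frac{1}{1-e^{-\epsilon s/2}}$ appears to blow up. The fix is the elementary inequality $1-e^{-x}\ge \frac{x}{1+x}$, which keeps the sum under control in that regime and certifies that the tail never exceeds $O(k)$. Everything else is routine substitution and Gaussian/harmonic-series bookkeeping, exactly as in the two preceding propositions.
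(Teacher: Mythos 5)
Your proposal is correct and follows essentially the same route as the paper: derive the closed form $L_{g,r}(t)=\frac{1}{t+1}\bigl(\frac{1-2\epsilon}{1-\epsilon}\bigr)^{t+1}\le \frac{e^{-\epsilon(t+1)}}{t+1}$, choose the threshold $s\asymp\sqrt{(\log n)/\epsilon}$ so that $\binom{k}{2}s^2=O\bigl(\frac{k^2\log n}{\epsilon}\bigr)$, and show the remaining sum in Theorem~\ref{thm:edge} is negligible. The only difference is bookkeeping: where the paper simply asserts the tail sum is less than $1$, you bound it explicitly as $O(k)$ via a geometric-series estimate, which is if anything a more careful rendering of the same argument.
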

\begin{proof}
For this case, we have
$L_{g,r}(t) \leq \frac{e^{-\epsilon(t+1)}}{t+1}
$ (see, Section~\ref{app:lgrt}). Choose $s = \sqrt{\frac{4\log{n}}{\epsilon}}+1$. Then using Theorem \ref{thm:edge}, $n\sum_{i=1}^{k} \sum_{\ell=s}^{|V_i|} \ell L_{g,r}\Big(\binom{\ell}2\Big) \leq n\sum_{i=1}^{k} \sum_{\ell=s}^{|V_i|}  \frac{2e^{-\epsilon\binom{\ell}2}}{\ell-1}< 1$. Therefore, the number of queries is $O\left(n+\frac{k^2\log{n}}{\epsilon} \right)$, matching the lower bound within a $\log{n}$ factor.
\end{proof}

For the Node-ordering algorithm, we have the following result as a corollary of Theorem~\ref{thm:node}.

\begin{proposition} [{\bf Node-Ordering}]
\label{prop:dist-3}
When $f_g, f_r \sim $ {\bf Dist-1}, the node ordering algorithm has query complexity $O(nk(1-\epsilon^2))$ for \cc.
When $f_g, f_r \sim $ {\bf Dist-2},  node ordering  has query complexity $O\left(n+\frac{k^2\log{n}}{\epsilon}\right)$ for \cc.
\end{proposition}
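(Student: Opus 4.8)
The plan is to obtain both bounds directly from Theorem~\ref{thm:node}, which caps the node ordering cost at $n+\sum_{i=1}^{k}\sum_{s=1}^{|V_i|}\min\{k,(n-|V_i|)L_{g,r}(s)\}$, and then to insert the two forms of $L_{g,r}$ computed in Section~\ref{app:lgrt}, namely $L_{g,r}(s)\approx\frac{1-2\epsilon}{s+1}$ for {\bf Dist-1} and $L_{g,r}(s)\le\frac{e^{-\epsilon(s+1)}}{s+1}$ for {\bf Dist-2}. In each case I would replace $(n-|V_i|)$ by $n$ (a valid upper bound) and study the inner sum $\sum_{s=1}^{|V_i|}\min\{k,nL_{g,r}(s)\}$ by locating the crossover index $s^\ast$ where $nL_{g,r}(s)$ falls below $k$, splitting the sum into a head ($s\le s^\ast$, each term $=k$) and a tail ($s>s^\ast$, each term $=nL_{g,r}(s)$), bounding the two pieces, and finally summing over the $k$ clusters using $\sum_i|V_i|=n$ and the concavity of $\log$ to handle unequal cluster sizes, exactly as in the edge ordering proof.

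For {\bf Dist-2} the exponential decay puts the crossover at $s^\ast\approx\frac1\epsilon\ln\frac nk=O\big(\frac{\log n}{\epsilon}\big)$, so the head contributes $ks^\ast=O\big(\frac{k\log n}{\epsilon}\big)$ per cluster and $O\big(\frac{k^2\log n}{\epsilon}\big)$ overall. For the tail I would invoke the defining identity $\frac{n}{s^\ast+1}e^{-\epsilon(s^\ast+1)}=k$ and the monotonicity $\frac1{s+1}\le\frac1{s^\ast+1}$ to collapse $\sum_{s>s^\ast}nL_{g,r}(s)$ into a geometric series summing to $\frac{k}{1-e^{-\epsilon}}=O(k/\epsilon)$ per cluster, i.e.\ $O(k^2/\epsilon)$ in total; adding the leading $n$ yields $O\big(n+\frac{k^2\log n}{\epsilon}\big)$.

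For {\bf Dist-1} the decay is only polynomial, so the crossover sits at $s^\ast\approx\frac{n(1-2\epsilon)}{k}$, which for equal clusters is a factor $(1-2\epsilon)$ below the cluster size $n/k$; thus almost every term equals $k$ and the head alone is already $\Theta(nk)$. Writing $c=1-2\epsilon$, the head contributes $ks^\ast\approx nc$ per cluster and the harmonic tail contributes $\sum_{s>s^\ast}^{|V_i|}\frac{nc}{s+1}\approx nc\ln\frac1c$ per cluster, for a per-cluster total of $n\,g(c)$ with $g(c)=c\,(1+\ln\frac1c)$. Summing over clusters gives $nk\,g(c)$, and since $g(1)=1$ while $g'(c)=-\ln c$ vanishes at $c=1$, a second-order expansion yields $1-g(1-2\epsilon)=\Theta(\epsilon^2)$, hence the claimed $O(nk(1-\epsilon^2))$.

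The step I expect to be the real obstacle is this {\bf Dist-1} estimate, because the target is a second-order refinement of the trivial $nk$ bound: the $\Theta(\epsilon)$ saving promised by the leading coefficient $1-2\epsilon$ of $L_{g,r}$ is almost exactly undone by the $\Theta(\epsilon)$ cost of the harmonic tail, and only the vanishing of $g'(1)$ exposes the genuine $\Theta(\epsilon^2)$ gain. One must therefore track head and tail to second order and cannot discard lower-order terms prematurely. The {\bf Dist-2} analysis, by contrast, is stable: the exponential factor makes the tail a rapidly converging geometric series whose sum is dominated by the crossover value $k$, and the overall cost is essentially just the head length $ks^\ast$ summed over clusters.
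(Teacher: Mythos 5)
Your proposal is correct, and it shares the paper's scaffolding---both bounds are read off from Theorem~\ref{thm:node} by inserting the appendix estimates for $L_{g,r}$ and splitting the inner sum at a crossover index---but your handling of the \textbf{Dist-1} tail is genuinely different from the paper's. For \textbf{Dist-2} the two arguments essentially coincide: the paper takes the slightly larger crossover $s^\ast=\frac{2\log n}{\epsilon}$, which makes every tail term at most $\frac{1}{n(s+1)}$ and hence the whole tail negligible, whereas you take $s^\ast\approx\frac{1}{\epsilon}\log\frac{n}{k}$ and control the tail by a geometric series of total $O(k/\epsilon)$ per cluster; either way the head $O(k\log n/\epsilon)$ per cluster dominates and gives $O\big(n+\frac{k^2\log n}{\epsilon}\big)$. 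For \textbf{Dist-1}, the paper sidesteps the second-order cancellation you rightly flag as the delicate step. It does not cut at the natural crossover $\frac{n}{k}(1-2\epsilon)$ where $nL_{g,r}=k$; it cuts at $s^\ast=\frac{n}{k}(1-\epsilon)$, bounds each tail term \emph{flatly} by $\frac{1-2\epsilon}{1-\epsilon}k\le(1-\epsilon)k$ (the harmonic decay is never used past the crossover), and uses $\sum_i|V_i|=n$ to bound the total tail length over all clusters by $\epsilon n$, since the head already accounts for $k\sum_i\min\{|V_i|,s^\ast\}\le nk(1-\epsilon)$ queries. The total is then $nk(1-\epsilon)+\epsilon n(1-\epsilon)k=nk(1-\epsilon)(1+\epsilon)=nk(1-\epsilon^2)$: the factor $1-\epsilon^2$ emerges from the product $(1-\epsilon)(1+\epsilon)$ by purely first-order bookkeeping, with no Taylor expansion and no near-cancelling terms to track. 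Your route---keeping the harmonic tail, summing it to $nck\ln\frac1c$ with $c=1-2\epsilon$ via concavity, and expanding $g(c)=c\big(1+\ln\frac1c\big)=1-2\epsilon^2+O(\epsilon^3)$---is also sound (one can check $g(1-2\epsilon)\le 1-\epsilon^2$ for all $0<\epsilon<\frac12$, so the claimed bound follows), and it buys a marginally sharper constant together with an explanation of \emph{why} the gain is second order; what the paper's shifted crossover buys is exactly the robustness you were worried about.
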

\begin{proof}
For {\bf Dist-1}, $L_{g,r}(s)\approx \frac{1-2\epsilon}{s+1}$. Therefore, when $s\geq\frac{n}{k}(1-\epsilon)$, $\min{\{k, (n-|V_i|)L_{g,r}(s)\}}\leq(1-\epsilon)k$. Thus, the total number of queries is $O(nk(1-\epsilon)+\epsilon nk(1-\epsilon))=O(nk(1-\epsilon^2))$. 
For {\bf Dist-2}, $L_{g,r}(s)=\frac{\exp(-\epsilon s)}{s+1}$. Therefore, when $s\geq \frac{2\log{n}}{\epsilon}$, $\min{\{k, (n-|V_i|)L_{g,r}(s)\}} \leq \frac{1}{n(s+1)}$. Thus the total number of expected queries is $O(n+\sum_{i=1}^{k}\frac{k\log{n}}{\epsilon}+\frac{|V_i|\log{|V_i|}}{n})=O(n+\frac{k^2\log{n}}{\epsilon})$. 
\end{proof}

Note that, there is no difference in the upper bounds given between the Edge and Node ordering algorithms for {\bf Dist-2}.  But  Edge-ordering  uses order $\log(n/k)$ factor more queries than the optimal ($O(nk)$) for {\bf Dist-1}.
{\bf Dist-1} is closer to uniform distribution by the Hellinger measure than {\bf Dist-2},  which shows that Hellinger distance is the right choice for distance here.  Assuming $k=o(n)$, we get a drastic reduction in query complexity by moving from {\bf Dist-1} to {\bf Dist-2}.

\begin{figure*}[ht]
	\centering
	\vspace{-0.5in}
\subfloat[\texttt{similarity value distribution}]{\includegraphics[width=0.45\textwidth]{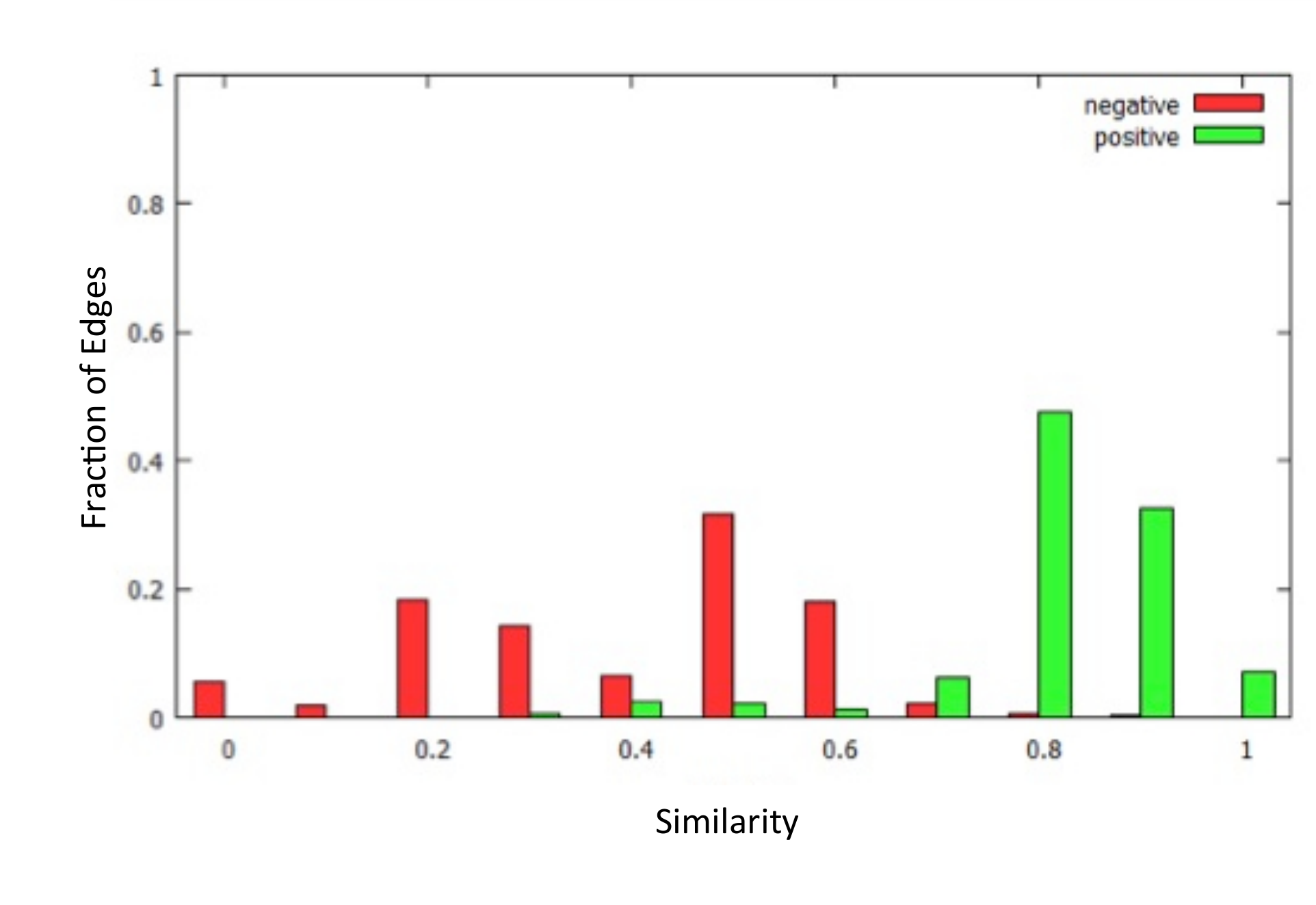}}
%\hspace{0.2cm}
\subfloat[\texttt{ \#queries vs recall}]{\includegraphics[width=0.4\textwidth]{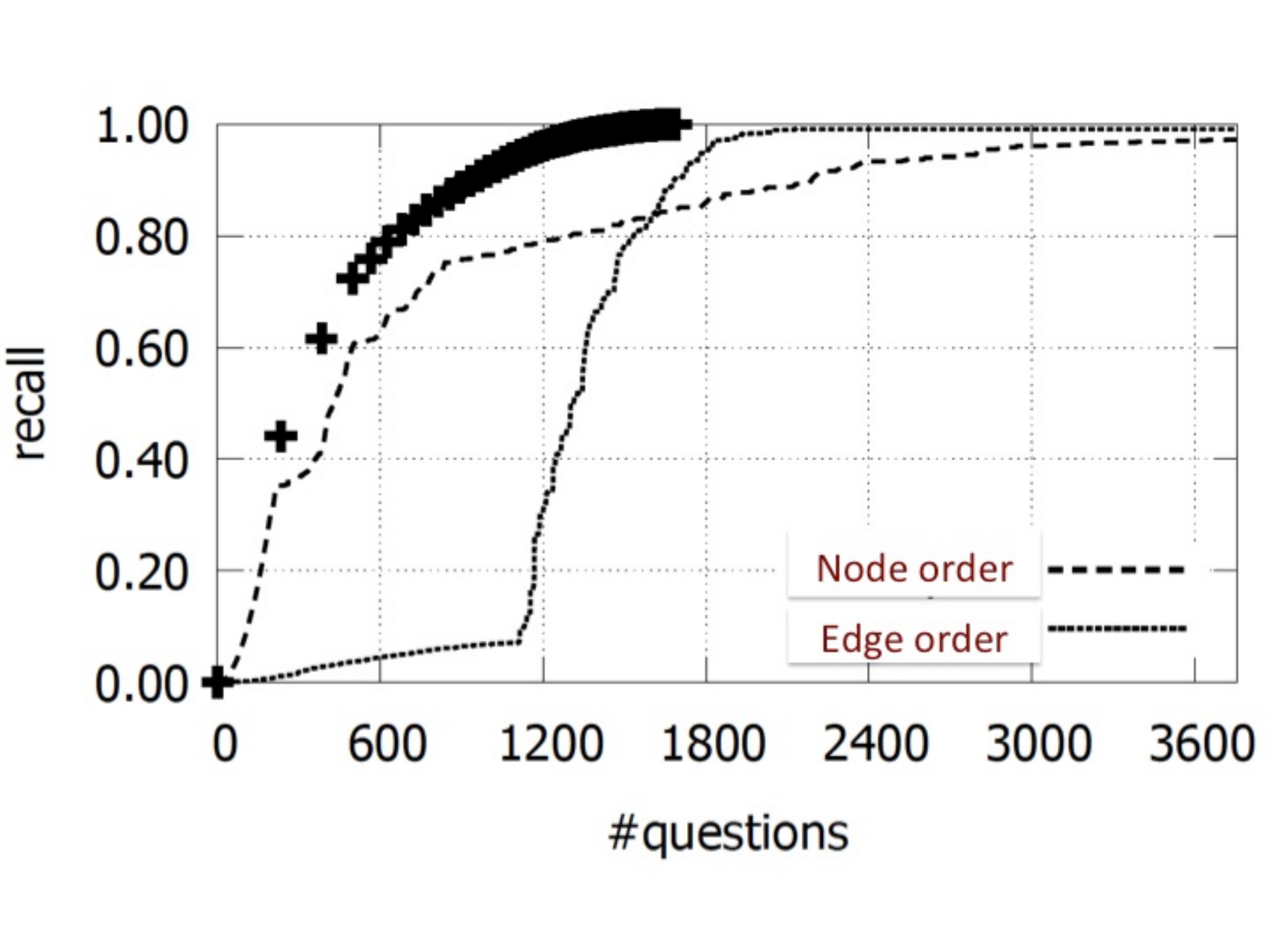}}
\caption{\texttt{cora}
	\label{fig:graph}}
\end{figure*}

\section{Analysis of the Edge ordering algorithm: proof of Theorem \ref{thm:edge}}\label{sec:edge}
Let $R$ be a random variable with distribution $f_r$ and $G_1, \dots, G_t$ be identical random variables with distribution $f_g$. Let $R, G_1, G_2, \dots, G_t$ be all independent. 
Note that,
\begin{align}
&\Pr(R \ge \max\{G_1, \dots, G_t\})\nonumber \\
&= \int_0^1\Big(\int_0^r f_g(y) dy\Big)^tf_r(x)dx = L_{g,r}(t) .
\end{align}

In the interest of clarity, let us call a pair $(u,v) \in V\times V$ a {\em green} edge iff $u, v \in V_i$ for some $i =1, \dots, k$, and otherwise call the pair a {\em red edge}.

In the current graph, let there exist $\ell$ nodes, called $U\subset V$, which all belong to the same cluster but no edge from the induced graph on these $\ell$ vertices have been queried yet. Then there are $\binom{\ell}2$ green edges within $U$, yet to be queried. On the other hand, there are at most $n\ell$ red edges with one end point incident on the vertices in $U$. We now count the number of red edges incident on $U$ that the algorithm will query before querying a green edge within  $U$. We can account for all the red edges queried by the algorithm by considering each cluster at a time, and summing over the queried red edges incident on it. In fact, by doing this, we double count every red edge. Since the probability of querying a red edge incident on $U$ before querying any of the $\binom{\ell}2$ green edges incident on $U$ is $L_{g,r}(\binom{\ell}2$, the 
expected number of  queried red edges incident on $U$ before querying a green edge in $U$  is at most $ n\ell L_{g,r}(\binom{\ell}2).$ 

Let $s$ be a positive integer. Consider a cluster $V_i: |V_i| \ge s$. Suppose at some point of time, there are $\ell$ components of $V_i$ remaining to be connected. Then, again there are at least $\binom{\ell}2$ green edges, querying any of which will decrease the number of components by $1$. Thus, 
 the expected number of red edges that are queried incident on nodes in $V_i$ before there remain at most $s$ components of $V_i$ is at most $n\sum_{\ell=s}^{|V_i|} \ell L_{g,r}(\binom{\ell}2)$.
%We have 
%\begin{align*}
%\sum_{R=\sqrt{\frac{n}{k}}}^{|C_i| } \frac{n}{R} &\leq  n \int_{R=\sqrt{\frac{n}{k}}}^{|C_i|}\frac{1}{R}dR \\
%&= n(\ln{|C_i|}-\ln{\big(\sqrt{\frac{n}{k}}\big)})=n\ln{\frac{|C_i|}{\sqrt{\frac{n}{k}}}}
%\end{align*}
Therefore, the expected number of red edges that are queried until only $s$ components are left for every cluster is 
%\begin{align*}
$n\sum_{i=1}^{k} \sum_{\ell=s}^{|V_i|} \ell L_{g,r}\Big(\binom{\ell}2\Big).$
%\end{align*}

Now the number of red edges across the clusters  having size at most $s$ is at most
$\binom{k}{2}s^2.$
Therefore, even if we query all those edges, we get 
the total number of queried red edges to be at most $\binom{k}{2}s^2+ n\sum_{i=1}^{k} \sum_{\ell=s}^{|V_i|} \ell L_{g,r}\Big(\binom{\ell}2\Big).$

The algorithm queries a total of $n-k$ green edges, exactly spanning every cluster. Thus the total number of queries is at most $n+\binom{k}{2}s^2+ n\sum_{i=1}^{k} \sum_{\ell=s}^{|V_i|} \ell L_{g,r}\Big(\binom{\ell}2\Big).$

\remove{

\subsection{Analysis of Wang et al. \cite{} Edge-ordering Strategy}

We are now ready to analyze the edge-ordering strategy.

%Suppose $G=(V,E)$ consists of $k$ clusters $\calC_1,\calC_2,\calC_3,...,\calC_k$. Let their sizes be respectively $c_1,c_2,..,c_k$. Then the total number of green edges in $G$ is
%\begin{align}
%\label{eq:green}
%\sum_{i=1}^{k} {c_i \choose 2} =\frac{1}{2}\sum_{i=1}^{k}c_i^2-c_i&=\left(\frac{1}{2}\sum_{i=1}^{k}c_i^2 \right)-\frac{n}{2} \nonumber\\
%&\geq \frac{1}{2k} \left(\sum_{i=1}^{k} c_i\right)^2-\frac{n}{2} ~~\text{by Cauchy Schwartz}\nonumber\\
%&=\frac{n^2}{2k}-\frac{n}{2}
%\end{align}
%
%We know by transitive relation if two vertices $a$ and $b$ are connected by a path of only green edges, and if  $query(a,u)=RED$ for some vertex $u$ not in that path, then that implies $query(b,u)$ is also RED and the algorithm never queries $(b,u)$. Similarly, if $query(a,u)=GREEN$, then $query(b,u)$ must also be GREEN and again the algorithm never queries it. Hence, we can merge $a$ and $b$, and all other vertices in the same component formed by the green edges and treat them as a single super-vertex subsequently. The above relation (Eq. \ref{eq:green}) holds in the new contracted graph. 

%\begin{proposition}
%\label{prop:uniform}
%Under the uniform noise model where $f_g, f_r \sim Unif[0,1]$, the edge-ordering strategy is a $\min{\{O(k(\ln{n}-\ln{k})),O(\sqrt{n}\ln{n})\}}$-approximation.
%\end{proposition}
\begin{proof}
%When $f_g, f_r \sim Unif[0,1]$, we can set $\epsilon=0$ for the computation of $Prob(R \geq \max(G_1,G_2,..,G_t))$ to obtain $Prob(R \geq \max(G_1,G_2,..,G_t)) =\frac{1}{t+1}$. 

In the current graph, let there exist $R$ nodes which all belong to the same cluster but no edge from the induced graph on these $R$ vertices have been queried yet. Then there are $O(R^2)$ green edges incident on these vertices, yet to be queried. On the other hand, there are at most $nR$ red edges with one end point incident on the vertices in $R$. We now count the number of red edges incident on $R$ that the algorithm will query before querying a green edge incident on $R$. We can account for all the red edges queried by the algorithm by considering each cluster at a time, and summing over the queried red edges incident on it. In fact, by doing this, we double count every red edge.
$$\expect[\# \text{ queried red edges incident on $R$ before querying a green edge incident on $R$} ] = O\left( \frac{nR}{R^2}\right)=O\left( \frac{n}{R}\right)$$

Now consider the $i$th cluster. Suppose $|C_i| > \sqrt{\frac{n}{k}}$. The expected number of red edges that are queried incident on nodes in $C_i$ before there remain at most $\sqrt{\frac{n}{k}}$ components of $C_i$ is $O\left(\sum_{R=|C_i|}^{\sqrt{\frac{n}{k}}} \frac{n}{R}\right)$.

We have 
\begin{align*}
\sum_{R=\sqrt{\frac{n}{k}}}^{|C_i| } \frac{n}{R} &\leq  n \int_{R=\sqrt{\frac{n}{k}}}^{|C_i|}\frac{1}{R}dR \\
&= n(\ln{|C_i|}-\ln{\big(\sqrt{\frac{n}{k}}\big)})=n\ln{\frac{|C_i|}{\sqrt{\frac{n}{k}}}}
\end{align*}

Therefore, the expected number of red edges that are queried until only $\sqrt{\frac{n}{k}}$ components are left for every cluster is 
\begin{align*}
n\sum_{i=}^{k} \ln{\frac{|C_i|}{\sqrt{\frac{n}{k}}}}&=nk \ln{\left(\prod_{i=1}^{k} \frac{|C_i|}{\sqrt{\frac{n}{k}}}\right)^{\frac{1}{k}}}\\
&\leq nk \ln{\sum_{i=1}^{k} \frac{|C_i|}{k\sqrt{\frac{n}{k}}}}=nk\ln{\sqrt{\frac{n}{k}}}
\end{align*}

Now the number of red edges across the clusters  having size at most $\sqrt{\frac{n}{k}}$ is 
$\leq k^2\frac{n}{k}=nk$

Therefore, even if we query all those edges, we get 
the total number of queries to be $O\left(nk(\ln{n}-\ln{k})\right)$

The algorithm queries a total of $n-k$ green edges, exactly spanning every cluster. Thus the total number of queried edges is $O(nk(\ln{n}-\ln{k}))$, giving an $O(k(\ln{n}-\ln{k}))$-approximation.

On the other hand, if $k \leq \sqrt{n}$, then this gives a $O(\sqrt{n}\ln{n})$-approximation. If $k > \sqrt{n}$, then any optimal algorithm must query $O(k^2)$ edges, and we have $nk \leq \sqrt{n} k^2$. Thus under uniform noise model, the edge-ordering strategy gives an $O(\sqrt{n}\ln{n})$-approximation.
\end{proof}
}

\section{Analysis of the Node ordering algorithm: proof of Theorem \ref{thm:node}}\label{sec:node}
The computed expected cluster size for each node can be a highly biased estimator, and may not provide any useful information. For example, the expected cluster size of a node in $V_i$ is $\frac{\epsilon}{c} |V_i|+(\frac{1}{2}-\frac{\epsilon}{2c})n$ where $c=2$ for {\bf Dist 1} and $c=1$ for {\bf Dist 2}. Therefore, the node ordering considered by \cite{vesdapunt2014crowdsourcing} can be arbitrary. Hence, for the purpose of our analysis, we ignore this ordering based on the expected size.

Consider the state of the algorithm where it needs to insert a node $v$ which truly belongs to cluster $V_i$. Suppose the current size of $V_i$ is $s$, that is $V_i$ already contains $s$ nodes when $v$ is considered. Consider another cluster $V_j$, $j \neq i$, and let its current size be $s'$. Let $C_i$ and $C_j$ denote the current subclusters of $V_i$ and $V_j$ that have been formed.Then, $P(w_{v,u} \geq \max_{x \in V_i} w_{v,x})$ where $u \in C_j$ is at most $L_{g,r}(s)$. Hence, $P(\exists u \in C_j, w_{v,u} \geq \max_{x \in V_i} w_{v,x})\leq \min{\{1,s'L_{g,r}(s)\}}$. Thus the expected number of queried red edges before $v$ is correctly inserted in $V_i$ is at most $\min{\{k, L_{g,r}(s)\sum_{j \in [1,k], j \neq i}|V_j|\}}\leq \min\{k, (n-|V_i|)L_{g,r}(s)\}$. Hence the expected total number of queried red edges to grow the $i$th cluster is at most $\sum_{s=1}^{|V_i|}  \min\{k, (n-|V_i|)L_{g,r}(s)\}$, and thus the expected total number of queries, including green and red edges is bounded by $n+\sum_{i=1}^{k}\sum_{s=1}^{|V_i|}  \min\{k, (n-|V_i|)L_{g,r}(s)\}$.

\section{Experimental Observations} 

A detailed comparison of the node ordering and edge ordering methods on multiple real datasets has been shown in \cite[Figures 12,14]{vesdapunt2014crowdsourcing}. %, which motivated this paper. 
The number of queries issued by the two methods are very close on complete resolution.To validate further, we did the following experiments. 

\noindent{\bf Datasets.}
(i) We created multiple synthetic datasets each containing $1200$ nodes and $14$ clusters with the following size distribution: two clusters of size $200$, four clusters of size $100$, eight clusters of size $50$, two clusters each of size $30$ and $20$ and the rest of the clusters of size $10$. The datasets differed in the way similarity values are generated by varying $\epsilon$ and sampling the values either from {\bf Dist-1} or {\bf Dist-2}. The similarity values are further discretized to take values from the set $\{0,0.1,0.2,...,0.9,1\}$.

(ii) We used the widely used \texttt{cora}~\cite{cora2004} dataset for ER.  \texttt{cora} is a bibliography dataset, where
each record contains title, author, venue, date, and pages attributes. There are $1878$ nodes in total with $191$ clusters, among which $124$ are non-singletons. The largest cluster size is $236$, and the total number of pairs is $17,64,381$. We used the similarity function as in ~\protect\cite{whang2013question,wang2013leveraging,vesdapunt2014crowdsourcing,fss:16}. 
%Figure \ref{fig:graph}(a) shows the similarity value distribution for \texttt{cora} which is close to {\bf Dist-2}.

\noindent{\bf Observation.}
% Requires the booktabs if the memoir class is not being used
\begin{table}[htbp]
   \centering
   \begin{tabular}{|l|l|l|l|} % Column formatting, @{} suppresses leading/trailing space
   \hline
     Node-Ordering    & Edge-Ordering & Distribution & $\epsilon$\\
      \hline
      4475     & 4460 & {\bf Dist-1} & $\epsilon=\frac{1}{2}$ \\
      5207    &  6003   &  {\bf Dist-1} & $\epsilon=\frac{1}{3}$ \\
      5883       & 7145  & {\bf Dist-1} & $\epsilon=\frac{1}{4}$ \\
      6121       & 7231  & {\bf Dist-1} & $\epsilon=\frac{1}{5}$ \\
      6879 & 8545   &  {\bf Dist-1} & $\epsilon=\frac{1}{10}$ \\
      7398 & 9296 &{\bf Dist-1} & $\epsilon=\frac{1}{20}$\\
      \hline
      1506 & 1277 & {\bf Dist-2} & $\epsilon=\frac{1}{5}$ \\
1986 & 1296 & {\bf Dist-2} & $\epsilon=\frac{1}{10}$\\
2760 & 1626 & {\bf Dist-2} & $\epsilon=\frac{1}{20}$\\
\hline
   \end{tabular}
   \caption{Number of Queries for {\bf Dist-1} and {\bf Dist-2}}
   \label{tab:queries}
\end{table}
The number of queries for the node-ordering and edge-ordering algorithms are reported in Table \ref{tab:queries} for the synthetic datasets. Clearly, the number of queries asked for {\bf Dist-2} is significantly less than that for {\bf Dist-1} at the same value of $\epsilon$. This confirms with our theoretical findings. Interestingly, we observe that the number of queries asked by the edge-ordering algorithm is consistently higher than the node-ordering algorithm under {\bf Dist-1}. This is also expected from Propositions~\ref{prop:dist-1} and ~\ref{prop:dist-3} due to a gap of $\log{\frac{n}{k}}$ in the number of queries of the two algorithms. In a similar vein, we see the edge-ordering algorithm is more effective than the node-ordering for {\bf Dist-2}, possibly because of hidden constants in the asymptotic analysis. 

Figure \ref{fig:graph}(a) shows the similarity value distribution for \texttt{cora} which is closer to {\bf Dist-2} than {\bf Dist-1}. 
Figure  \ref{fig:graph}(b) shows the recall vs number of queries issued by the two methods. 
The line marked with  `+' sign is the curve for the ideal algorithm that will ask only the required ``green'' edges first to grow all the clusters and then ask just one ``red'' edge across every pair of clusters. 
Upon completion, the number of queries issued by the edge ordering and node ordering methods are respectively 21,099 and 23,243 which are very close to optimal. Interestingly, this confirms with our observation on the However, they achieve above $0.996$ recall in less than $4,000$ queries. This can also be explained by our analysis. The remaining large number of queries are mainly spent on growing small clusters, e.g. when cluster sizes are $o(\log{n})$--they do not give much benefit on recall, but consume many queries.

\section{Appendix:  $L_{g,r}(t)$ for {\bf Dist-1}, {\bf Dist-2}} \label{app:lgrt}
%\section{Querying strategy: upper bound on query complexity}
%\section{Analysis}
%Let $G_1,G_2,...,G_t\sim G$ for some $t \geq 1$. For both the distributions, we obtain an upper bound on $$\Pr(R \geq \max_{t}(G_1,G_2,...,G_t))$$ which will be useful for our analysis.
\begin{proposition}
For $f_g,f_r \sim$ {\bf Dist-1} and small $\epsilon$, we have $L_{g,r}(t) \approx \frac{ (1-\epsilon)}{(1+\epsilon)(t+1)}$.  % if $f_r$ and $f_g$ are drawn according to Dist-1.
\end{proposition}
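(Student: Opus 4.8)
The plan is to evaluate $L_{g,r}(t)$ in closed form and then discard a lower-order term. The first step is to rewrite the functional using the CDF of $f_g$. From the identity established at the start of Section~\ref{sec:edge}, namely $L_{g,r}(t)=\Pr\big(R\ge\max\{G_1,\dots,G_t\}\big)$, we have $L_{g,r}(t)=\int_0^1 F_g(x)^t f_r(x)\,\diff x$, where $F_g(x)=\int_0^x f_g(y)\,\diff y$. For {\bf Dist-1} this CDF is piecewise linear: on $[0,\tfrac12)$ it is $F_g(x)=(1-\epsilon)x$, and on $[\tfrac12,1]$ a short computation gives $F_g(x)=(1+\epsilon)x-\epsilon$ (one checks continuity at $x=\tfrac12$, where both pieces equal $\tfrac{1-\epsilon}{2}$, and $F_g(1)=1$).

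Next I would split the defining integral at the breakpoint $x=\tfrac12$, using $f_r=1+\epsilon$ on the lower piece and $f_r=1-\epsilon$ on the upper piece. The lower piece, $\int_0^{1/2}[(1-\epsilon)x]^t(1+\epsilon)\,\diff x$, is an elementary power integral equal to $\frac{(1+\epsilon)(1-\epsilon)^t}{(t+1)2^{t+1}}$. For the upper piece, $\int_{1/2}^{1}[(1+\epsilon)x-\epsilon]^t(1-\epsilon)\,\diff x$, I would substitute $u=(1+\epsilon)x-\epsilon$ (so $\diff u=(1+\epsilon)\,\diff x$), which maps the limits to $u=\tfrac{1-\epsilon}{2}$ and $u=1$, turning it into $\frac{1-\epsilon}{(1+\epsilon)(t+1)}\big[1-\big(\tfrac{1-\epsilon}{2}\big)^{t+1}\big]$.

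Adding the two pieces and factoring out $\frac{1}{(1+\epsilon)(t+1)}$, the plan is to show the bracketed combination of the $(1-\epsilon)^t/2^{t+1}$ terms collapses using $(1+\epsilon)^2-(1-\epsilon)^2=4\epsilon$, yielding the exact identity
\[
L_{g,r}(t)=\frac{1}{(1+\epsilon)(t+1)}\left[(1-\epsilon)+2\epsilon\left(\tfrac{1-\epsilon}{2}\right)^{t}\right].
\]
The correction term $2\epsilon\big(\tfrac{1-\epsilon}{2}\big)^t$ is $O(\epsilon)$ relative to the leading $(1-\epsilon)$ (indeed, since $\tfrac{1-\epsilon}{2}<\tfrac12$, its ratio to $1-\epsilon$ is at most $\tfrac{2\epsilon}{1-\epsilon}$ and decays geometrically in $t$), so dropping it for small $\epsilon$ gives the claimed $L_{g,r}(t)\approx\frac{1-\epsilon}{(1+\epsilon)(t+1)}$.

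This argument is entirely computational, so there is no conceptual obstacle; the only step demanding care is the substitution in the upper piece and the cancellation that produces the $4\epsilon$ factor, where an off-by-one in the exponents or a mishandled limit would corrupt the clean closed form. A useful sanity check I would run along the way is $t=0$: the formula must return $L_{g,r}(0)=\int_0^1 f_r(x)\,\diff x=1$, and indeed $\frac{(1-\epsilon)+2\epsilon}{1+\epsilon}=1$, which validates the bookkeeping.
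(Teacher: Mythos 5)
Your proposal is correct and follows essentially the same route as the paper's proof: split the integral at $x=\tfrac12$, evaluate the lower piece as a power integral, handle the upper piece via the substitution $u=(1+\epsilon)x-\epsilon$, and combine. Your exact closed form $\frac{1}{(1+\epsilon)(t+1)}\bigl[(1-\epsilon)+2\epsilon\bigl(\tfrac{1-\epsilon}{2}\bigr)^{t}\bigr]$ is algebraically identical to the paper's $\frac{1-\epsilon}{(1+\epsilon)(t+1)}\bigl(1+\epsilon\bigl(\tfrac{1-\epsilon}{2}\bigr)^{t-1}\bigr)$, and your $t=0$ sanity check is a nice addition.
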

\begin{proof} We have,
\begingroup
\allowdisplaybreaks
\begin{align*}
&L_{g,r}(t)
=\int_{r=0}^{1}\left( \int_{x=0}^{r}f_{G}(x)\,\diff x \right)^t f_{R}(r)\, \diff r\\
&=\int_{r=0}^{1/2}\left( \int_{x=0}^{r}f_{G}(x)\,\diff x \right)^t (1+\epsilon)\, \diff r \\
& \hspace{0.2in}+ \int_{r=1/2}^{1}\left( \int_{x=0}^{r}f_{G}(x)\,\diff x \right)^t (1-\epsilon)\, \diff r\\
&=\int_{r=0}^{1/2}\left( \int_{x=0}^{r}(1-\epsilon)\,\diff x \right)^t (1+\epsilon)\, \diff r 
+ \int_{r=1/2}^{1} \\ & \left( \int_{x=0}^{1/2}(1-\epsilon) \,\diff x + \int_{x=1/2}^{r}(1+\epsilon) \,\diff x \right)^t (1-\epsilon)\, \diff r\\
%&=(1+\epsilon)\int_{r=0}^{1/2} r^t(1-\epsilon)^t \, \diff r  \\
%&+  (1-\epsilon)\int_{r=1/2}^{1}\left( (1-\epsilon)\frac{1}{2} + (1+\epsilon)\left(r-\frac{1}{2}\right) \right)^t\, \diff r\\
%&=(1+\epsilon)\int_{r=0}^{1/2} r^t(1-\epsilon)^t \, \diff r \\
% & \hspace{0.2in}+ (1-\epsilon)\int_{r=1/2}^{1}\left( r(1+\epsilon)-\epsilon \right)^t\, \diff r\\
%&=(1+\epsilon)(1-\epsilon)^t\frac{r^{t+1}}{(t+1)} \Big|_{0}^{1/2} \\
% &\hspace{0.2in}+ (1-\epsilon)\int_{r=1/2}^{1}\left( r(1+\epsilon)-\epsilon \right)^t\, \diff r\\
&=\frac{(1+\epsilon)(1-\epsilon)^t}{2^{t+1}(t+1)}+ (1-\epsilon)\int_{r=1/2}^{1}\left( r(1+\epsilon)-\epsilon \right)^t\, \diff r
 \end{align*}
\endgroup
 Set $z=r(1+\epsilon)-\epsilon$, then $\diff z=(1+\epsilon) \diff r$. We have
 \begin{align*}
 (1-\epsilon)&\int_{r=1/2}^{1}\left( r(1+\epsilon)-\epsilon \right)^t\, \diff r 
 =\frac{ (1-\epsilon)}{(1+\epsilon)}
 \int_{z=\frac{(1-\epsilon)}{2}}^{1} z^t \, \diff z \\
 &=\frac{ (1-\epsilon)}{(1+\epsilon)(t+1)} \left(1-\frac{(1-\epsilon)^{t+1}}{2^{t+1}}\right).
 \end{align*}
 Therefore,
 \begingroup
\allowdisplaybreaks
 \begin{align*}
&L_{g,r}(t)\\  %\Pr(R \geq \max_{t}(G_1,G_2,...,G_t)) \\
&=\frac{(1+\epsilon)(1-\epsilon)^t}{2^{t+1}(t+1)}+\frac{ (1-\epsilon)}{(1+\epsilon)(t+1)} \left(1-\frac{(1-\epsilon)^{t+1}}{2^{t+1}}\right) \\
%&=\frac{ (1-\epsilon)}{(1+\epsilon)(t+1)} +\frac{(1-\epsilon)^t}{2^{t+1}(t+1)}\left( (1+\epsilon)-\frac{(1-\epsilon)^2}{(1+\epsilon)}\right)\\
%&=\frac{ (1-\epsilon)}{(1+\epsilon)(t+1)} +\frac{(1-\epsilon)^t}{2^{t+1}(t+1)}\frac{4\epsilon}{(1+\epsilon)}\\
&=\frac{ (1-\epsilon)}{(1+\epsilon)(t+1)}\left(1+\epsilon\left(\frac{1-\epsilon}{2}\right)^{t-1}\right).
\end{align*}
\endgroup
\end{proof}
\begin{proposition}
For $f_g,f_r \sim$ {\bf Dist-2} we have $L_{g,r}(t) \leq \frac{\e^{-\epsilon(t+1)}}{t+1}$. % if $f_r$ and $f_g$ are drawn according to Dist-2.
\end{proposition}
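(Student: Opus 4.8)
The plan is to evaluate $L_{g,r}(t)$ in closed form exactly as was done for {\bf Dist-1}, and then bound the resulting expression by the claimed exponential. First I would write down the cumulative distribution of $f_g$: under {\bf Dist-2}, $f_g$ is supported on $[\epsilon,1]$, so $\int_{x=0}^{r} f_g(x)\,\diff x = 0$ for $r<\epsilon$ and equals $\frac{r-\epsilon}{1-\epsilon}$ for $\epsilon \le r \le 1$. Since $f_r$ is in turn supported on $[0,1-\epsilon]$, the outer integral $L_{g,r}(t)=\int_{r=0}^{1}\big(\int_{x=0}^{r} f_g(x)\,\diff x\big)^t f_r(r)\,\diff r$ receives nonzero contributions only from the overlap region $r\in[\epsilon,1-\epsilon]$, which is nonempty precisely because $\epsilon<\tfrac12$.

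On that interval the integrand is $\big(\frac{r-\epsilon}{1-\epsilon}\big)^t\cdot\frac{1}{1-\epsilon}$. The substitution $u=\frac{r-\epsilon}{1-\epsilon}$ (so that $\frac{1}{1-\epsilon}\diff r=\diff u$) folds the $f_r$ density into the measure and turns the integral into $\int_{0}^{(1-2\epsilon)/(1-\epsilon)} u^t\,\diff u$, giving the exact value
\[
L_{g,r}(t)=\frac{1}{t+1}\Big(\frac{1-2\epsilon}{1-\epsilon}\Big)^{t+1}.
\]
It then remains to show $\big(\frac{1-2\epsilon}{1-\epsilon}\big)^{t+1}\le \e^{-\epsilon(t+1)}$, for which it suffices to establish the base inequality $\frac{1-2\epsilon}{1-\epsilon}\le \e^{-\epsilon}$.

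I would prove this base inequality by rewriting $\frac{1-2\epsilon}{1-\epsilon}=1-\frac{\epsilon}{1-\epsilon}$ and applying the elementary bound $1-z\le \e^{-z}$ with $z=\frac{\epsilon}{1-\epsilon}$, yielding $1-\frac{\epsilon}{1-\epsilon}\le \e^{-\epsilon/(1-\epsilon)}$; then, since $0<1-\epsilon\le 1$ forces $\frac{\epsilon}{1-\epsilon}\ge \epsilon$, we get $\e^{-\epsilon/(1-\epsilon)}\le \e^{-\epsilon}$. Chaining these and raising both sides to the power $t+1$ (both bases lie in $(0,1)$, so the inequality is preserved) and dividing by $t+1$ delivers the claim. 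There is no serious obstacle here; the only point needing care is the bookkeeping of supports, namely recognizing that the product structure forces the effective range of integration to be the overlap $[\epsilon,1-\epsilon]$ rather than all of $[0,1]$, and that this interval is nonempty exactly under the standing assumption $\epsilon<\tfrac12$. The rest is a one-line substitution followed by two applications of $1-z\le \e^{-z}$.
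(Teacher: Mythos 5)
Your proposal is correct and follows essentially the same route as the paper: both evaluate $L_{g,r}(t)$ exactly as $\frac{1}{t+1}\left(\frac{1-2\epsilon}{1-\epsilon}\right)^{t+1}$ and then bound the base by $\e^{-\epsilon}$ using $1-z\le \e^{-z}$ together with $\frac{\epsilon}{1-\epsilon}\ge\epsilon$. The only cosmetic difference is the order of these two elementary facts (the paper passes through $(1-\epsilon)^{t+1}$, you pass through $\e^{-\epsilon(t+1)/(1-\epsilon)}$), which changes nothing of substance.
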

\begin{proof}
We have,
\begingroup
\allowdisplaybreaks
 \begin{align*}
&L_{g,r}(t)
=\int_{r=\epsilon}^{1-\epsilon} \left( \int_{x=\epsilon}^{r} f_{G}(x)\, \diff x \right)^t \frac{1}{1-\epsilon} \, \diff r\\
&=\int_{r=\epsilon}^{1-\epsilon} \left( \int_{x=\epsilon}^{r} \frac{1}{1-\epsilon} \, \diff x \right)^t \frac{1}{1-\epsilon} \, \diff r \\
&= \frac{1}{(1-\epsilon)^{t+1}} \int_{r=\epsilon}^{1-\epsilon} (r-\epsilon)^t \,\diff r 
= \frac{1}{t+1}\left(\frac{1-2\epsilon}{1-\epsilon}\right)^{t+1}\\
&=\frac{1}{t+1}\left(1-\frac{\epsilon}{1-\epsilon}\right)^{t+1} \leq \frac{(1-\epsilon)^{t+1}}{t+1}\le \frac{\e^{-\epsilon(t+1)}}{t+1}.
\end{align*}
\endgroup
\end{proof}

%\vspace{0.1in}

\noindent{\em Acknowledgements:} This research is supported in part by NSF CCF Awards 1464310, 1642658, 1642550 and a Google Research Award. The authors would like to thank Sainyam Galhotra for his many help with the simulation results.

\bibliographystyle{aaai}

%{\small
\bibliography{bibfile}

\end{document}